\theoremstyle{definition}
\newtheorem{theorem}{Theorem}
\newtheorem{proposition}{Proposition}
\newtheorem{lemma}{Lemma}
\newtheorem{corollary}{Corollary}
\newtheorem{example}{Example}
\newtheorem{remark}{Remark}
\def \R {\mathbb{R}}
\def \1{{\mathbf{1}}}
\newcommand{\D}{\displaystyle}
\def\R{{\mathbb R}}
\def\core{\mbox{\rm core}}
\def\({\langle}
\def\){\rangle}
\def\cA{{\mathcal A}}
\def\cB{{\mathcal B}}
\def\cC{{\mathcal C}}
\def\cF{{\mathcal F}}
\def\cL{{\mathcal L}}
\def\cM{{\mathcal M}}
\def\cS{{\mathcal S}}
\def\cV{{\mathcal V}}
\def\v0{{\bf 0}}
\def\ov{\overline}
\begin{document}

\title{A Discrete Choquet Integral\\ for Ordered Systems}
\author{Ulrich FAIGLE$^{1}$ and Michel GRABISCH$^{2}$\thanks{Corresponding
    author. Tel (+33) 1-44-07-82-85, Fax
    (+33) 1-44-07-83-01,
    email \texttt{michel.grabisch@univ-paris1.fr}} \\
\normalsize 1. Mathematisches Institut/ZAIK, Universit\"at zu K\"oln \\
\normalsize          Weyertal 80, D-50931 K\"oln, Germany\\
\normalsize          2. Centre d'Economie de la Sorbonne, Universit\'e Paris I\\
\normalsize          106-112, Bd de l'H\^opital, 75013 Paris, France\\
\normalsize         Email: faigle@zpr.uni-koeln.de, michel.grabisch@univ-paris1.fr}
\date{}
\maketitle

\begin{abstract}
A model for a Choquet integral for arbitrary finite set systems is
presented. The model includes in particular the classical model on the system of
all subsets of a finite set. The general model associates canonical non-negative
and positively homogeneous superadditive functionals with generalized belief
functions relative to an ordered system, which are then extended to arbitrary
valuations on the set system. It is shown that the general Choquet integral can
be computed by a simple Monge-type algorithm for so-called intersection systems,
which include as a special case weakly union-closed families. Generalizing
Lov\'asz' classical characterization, we give a characterization of the
superadditivity of the Choquet integral relative to a capacity on a union-closed
system in terms of an appropriate model of supermodularity of such capacities.
\end{abstract}

{\bf Keywords:} Choquet integral, belief function, measurability, set systems,
Monge algorithm, supermodularity

\section{Introduction}
The Choquet integral \cite{cho53} is a widely used and valuable tool in applied
mathematics, especially in decision theory (see, \emph{e.g.},
\cite{cha94,grla07a,sch98,sch89}). It was characterized by Schmeidler
\cite{sch86} and studied in depth by many authors (see, \emph{e.g.}, Murofushi and
Sugeno \cite{musu91a}). Interestingly, Lov\'asz~\cite{lov83} discovered it
independently in combinatorial optimization
\cite{lov83}, where it has become known as the \emph{Lov\'asz extension} of a set
function. We are particularly interested in Choquet integrals with respect to a
finite universe $N$, the usual environment in applications.

While the classical approach almost always assumes the family of measurable
subsets of $N$ to form an algebra (see also \cite{grla07}, where a ring is
considered), many practical situations (\emph{e.g.}, cooperative games,
multicriteria decision making) require a more general setting with only the
members of a certain subfamily $\cF\subseteq 2^N$ being feasible and no
particular ''nice'' algebraic structure apparent.

In such a general situation, the classical definition of the Choquet integral is
no longer easily utilizable: Many functions become non-measurable  in the
sense that their level sets do not necessarily belong to the family $\cF$.

It is the purpose of the present paper, to extend the notion of a Choquet
integral to arbitrary families $\cF$ of subsets in such a way that functions can
be integrated with respect to general set functions (and capacities being a
particular case). To do so, we consider $\cF$ as an ordered system (whose 
order relation may arise from a particular application model under
consideration).

Our model may be viewed as a discrete analogue of the idea of Riemannian sums in
the usual approach of integration theory.  We consider the approximation of
functions by step functions from below, focussing first on belief functions
(a.k.a. infinitely monotone capacities or positive games) as integration
measures. The key in our construction is the fact the set of non-negative,
positively homogeneous functionals that provide upper approximations to a belief
function have a well-defined unique lower envelope
(Lemma~\ref{l.Choquet-upper-integral}), which yields the Choquet integral in the
classical case by linear extension. In the general model, we extend it to
arbitrary set functions via its decomposition into a difference of belief
functions (Sections~2 and 3). The classical model is a special case of our
approach.

In Section~4, we introduce a heuristic to compute the general Choquet integral,
via a Monge-type (or greedy) algorithm. We prove that the Monge algorithm
computes the integral correctly for all valuations (integration measures) if it
is correct for simple belief functions (a.k.a. unanimity games).  Section~5
studies the important case of the set-theoretic containment order. In particular,
the so-called weakly union-closed families and algebras (where we recover Lehrer's
\cite{leh09} integral) are studied. In Section~6, intersection systems (which are
related to flow networks) are discussed. For these families, it is proved that
the Monge algorithm computes correctly the Choquet integral. Moreover, we study
under which conditions the integral is superadditive and, in particular,
generalize Lov\'asz' characterization of superadditive Choquet integrals to
general (weakly) union-closed structures.

\section{Fundamental notions}
An \emph{ordered system} is a pair $(\cF,\preceq)$, where $\cF$ is a family
of non-empty subsets of some set $N$ with $n:=|N|<\infty$ that covers all
elements of $N$, \emph{i.e.},
$$
 \bigcup_{F\in \cF} F =N,
$$
(partially) ordered by the precedence relation $\preceq$.

\begin{remark} The assumption that $\cF$ be partially ordered does not restrict the generality of our model: Any family $\cF$ can, for example, be \emph{trivially ordered} by setting
$$ F\preceq G \quad:\Longleftrightarrow\quad F=G.
$$
A classical example of order relation is set inclusion. More general orders will
be introduced in what follows.
\end{remark}

We set $m:=|\cF|$ and, for notational convenience, arrange (index) the members
of $\cF=\{F_1,\ldots,F_m\}$ in a monotonically decreasing fashion, \emph{i.e.}, such
that
\begin{equation}\label{eq.index-order}
      F_i\succeq F_j \quad \Longrightarrow \quad i\leq j  \quad(1\leq i,j\leq m).
\end{equation}

\subsection{Valuations and weightings}\label{sec:valuations-weightings}
A \emph{valuation} on $\cF$ is a function $v:\cF\to \R$. The $m$-dimensional
vector space of all valuations on $\cF$ is denoted by $\cV=\cV(\cF)$. Whenever
convenient, we identify $\cV$ with the vector space $\R^{\cF}$, which in turn
can be identified with the $m$-dimensional parameter space $\R^m$ via the index
rule (\ref{eq.index-order}).

\begin{remark}
Setting $\cF_0:=\cF\cup\{\emptyset\}$ and $v(\emptyset):=0$, valuations are
usually called \emph{games} defined on a subfamily of $2^N$. If in addition
  $v$ is non-negative and \emph{isotone} (or \emph{monotone}) w.r.t. $\preceq$
(\emph{i.e.}, $v(F)\leq v(G)$ whenever $F\preceq G$), we call $v$ a
\emph{capacity} or a \emph{fuzzy measure}, refering respectively to the work of
Choquet's~\cite{cho53} and Sugeno \cite{sug74}.
\end{remark}

We define the inner product of any $v,y\in \R^\cF$ as usual via
$$
     \langle v,y\rangle  := \sum_{F\in \cF} y_F v(F).
$$

\medskip
A \emph{weighting} is a function $f:N\to \R$. So the $n$-dimensional space $\R^N$ of all weightings could be identified with $\R^n$ if we fixed a linear arrangement of the elements of $N$. We set
$$
    \langle f,x\rangle  := \sum_{i\in N} f_ix_i \quad\mbox{for all $f,x\in \R^N$.}
$$

It is convenient to identify a subset $A\subseteq N$ with its \emph{incidence function} $\1_A: N\to \{0,1\}$, where
$$
    \1_A(i) = 1 \quad \Longleftrightarrow\quad i\in A,
$$
and to use the notation $x(A) := \langle \1_A,x\rangle $ for any $x\in \R^N$.

\subsection{Belief functions}\label{sec:density-belief}
A \emph{density} is a non-negative valuation $w:\cF\to\R_+$ and gives rise to its associated
cumulative function $\hat{w}:\cF\to \R$ with
$$
\hat{w}(F):= \sum_{F'\preceq F} w(F') \quad\mbox{for all $F\in \cF$.}
$$
We say that the valuation $v\in \cV$ is a \emph{belief function} if $v=\hat{w}$ is the cumulative function of some non-negative $w\in \cV$.

\begin{remark} Our definition generalizes the notion of \emph{belief
    functions} proposed by Shafer~\cite{sha76}, where
  $(\cF_0,\preceq)=(2^N,\subseteq)$. Note that belief functions are normalized ($v(N)=1$) in the classical definition. 
\end{remark}

Note that there is a one-to-one correspondence between densities and
positive functions. To see this, consider the \emph{incidence} matrix
$Z=[z_{ij}]$ of $(\cF,\preceq)$, where
$$
    z_{ij} := \left\{\begin{array}{cl} 1 &\mbox{if $F_i\preceq F_j$}\\
    0 &\mbox{otherwise.}\end{array}\right.
$$
$Z$ is (lower) triangular with diagonal elements $z_{ii} = 1$ and hence invertible. So we have for all $v,w\in \R^{\cF}$ (considered as row vectors):
\begin{equation}\label{eq.Moebius-inversion}
   w = \hat{v}  \quad \Longleftrightarrow \quad v = wZ^{-1}.
\end{equation}
\begin{remark}
The inverse $Z^{-1} =[\mu_{ij}]$ of the incidence matrix $Z$ is called the
\emph{M\"obius matrix} (or more classically the \emph{M\"obius function}) of the
order $(\cF,\preceq)$ and the relationship (\ref{eq.Moebius-inversion}) is known
as \emph{M\"obius inversion}: $v$ is called the \emph{M\"obius inverse} or
\emph{M\"obius transform} of $w$. M\"obius inversion is well known in capacity
theory and decision making. (For a general theory of M\"obius algebra see,
\emph{e.g.}, Rota~\cite{rot64}.)
\end{remark}

\medskip
A \emph{simple (belief) function} is a valuation $\zeta^i:\cF\to\{0,1\}$ with the defining property
$$
     \zeta^i(F) = 1 \quad:\Longleftrightarrow\quad F\succeq F_i \quad(F\in \cF).
$$
So $\zeta^i$ corresponds to the $i$th row of the incidence matrix $Z$ of
$(\cF,\preceq)$, which implies that the set $\{\zeta^1,\ldots,\zeta^m\}$ is a basis for
the valuation space $\cV$,  to which we refer as the \emph{incidence basis}. Whenever
convenient, we will use the notation $\zeta^F$ instead of $\zeta^i$  for the simple
function associated to $F=F_i$. Observe that the \emph{M\"obius relation}
\begin{equation}\label{eq.Moebius-relation}
    \hat{w} = wZ \quad\mbox{for all densities $w:\cF\to\R_+$}
\end{equation}
 exhibits belief functions as precisely the valuations in the simplicial cone $\cV^+$  generated by the simple functions, where
$$
   \cV^+ := \left\{\sum_{i=1}^m \beta_i\zeta^i\mid \beta_1,\ldots,\beta_m\geq 0\right\}.
$$
In particular, simple functions are belief functions in their own right.

\medskip
With any $v=\sum_{i=1}^m \beta_i\zeta^i\in \cV$, we associate the belief functions
$$
v^+:= \sum_{\beta_i\geq 0} \beta_i \zeta^i \quad\mbox{and}\quad v^-:= \sum_{\beta_j\leq 0} (-\beta_j) \zeta^j
$$
and thus obtain the natural representation
\begin{equation}\label{eq.belief-representation}
v = v^+ -v^- \quad\mbox{with $v^+,v^- \in \cV^+$.}
\end{equation}

\begin{remark}
Simple functions generalize so-called \emph{unanimity games}.
\end{remark}

\subsection{The classical Choquet integral}
Assume $(\cF_0,\preceq)=(2^N,\subseteq)$ and let $v:\cF_0\to \R$ be a game. For any non-negative
vector $f\in\R^n_+$, the \emph{(classical) Choquet integral}~\cite{cho53}
w.r.t. $v$ is defined by
\begin{equation}\label{eq:ccho}
\int f\, dv := \int_0^\infty v(\{i\in N\mid f_i\geq \alpha\})d\alpha.
\end{equation}
The definition immediately yields:
\begin{equation}\label{eq:1}
\int \1_A\, dv = v(A).
\end{equation}
The Choquet integral is non-decreasing w.r.t. $v$. Moreover, it is non-decreasing
w.r.t. $f$ if and only if $v$ is a capacity.

\medskip
Letting $F\mapsto \beta_F$ with $\beta_\emptyset:=0$ be the M\"obius inverse of
$v$ relative to $(2^N,\subseteq)$, the following representation of the Choquet
integral is well-known (see, \emph{e.g.}, \cite{chja89}):
\begin{equation}\label{eq:mobcho}
\int f\, dv = \sum_{F\subseteq N} \beta_F\min_{i\in F}f_i.
\end{equation}

Two functions $f,f'\in\mathbb{R}^N$ are \emph{comonotonic} if there are no
$i,j\in N$ such that $f_i>f_j$ and $f'_i<f'_j$ (equivalently, if the combined
level sets $\{i\in N\mid f_i\geq\alpha\}$, $\{i\in N\mid g_i\geq\alpha\}$ form a chain).
A functional $I: \mathbb{R}^n\rightarrow\mathbb{R}$ is \emph{comonotonic
  additive} if $I(f+f') = I(f) + I(f')$ is true for any two comonotonic $f,f'\in \R^n$.

\medskip
The next result is a direct consequence of Theorem 4.2 in \cite{musuma94} and
generalizes Schmeidler's~\cite{sch86} characterization of
the Choquet integral w.r.t. a capacity (where positive homogeneity is
replaced by stipulating that $I$ be non-decreasing w.r.t. the integrand).

\begin{proposition}[Characterization w.r.t. a set function]\label{prop:schm1}
  The functional $I: \mathbb{R}^n\rightarrow\mathbb{R}$ is the Choquet
  integral w.r.t. a set function $v$ on $2^N$ if and only if $I$ is positively
  homogeneous, comonotonic additive, and $I(0)=0$. Then $v$ is uniquely
  determined by (\ref{eq:1}).
\end{proposition}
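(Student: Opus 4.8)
The plan is to deduce this directly from Theorem 4.2 of \cite{musuma94}, which the text says characterizes comonotonic additive functionals, together with the explicit M\"obius representation (\ref{eq:mobcho}) of the classical Choquet integral. I would organize the argument in two directions. For the ''only if'' direction, I would start from a set function $v$ on $2^N$ and its Choquet integral $I(f)=\int f\,dv$, and verify the three listed properties one by one. Positive homogeneity $I(\lambda f)=\lambda I(f)$ for $\lambda\geq 0$ is immediate from the defining integral (\ref{eq:ccho}) by the substitution $\alpha\mapsto\lambda\alpha$; the normalization $I(0)=0$ is trivial from (\ref{eq:ccho}) since all level sets of the zero function beyond $\alpha=0$ are empty and $v(\emptyset)=0$; and comonotonic additivity is the classical fact that for comonotonic $f,f'$ the level sets $\{f\geq\alpha\}$ and $\{f'\geq\alpha\}$ are nested, so the level sets of $f+f'$ decompose compatibly and (\ref{eq:ccho}) adds. (Alternatively, comonotonic additivity follows from (\ref{eq:mobcho}) because on a comonotonic pair each $\min_{i\in F}$ is attained at the same index for $f$, $f'$, and $f+f'$ simultaneously.)

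For the ''if'' direction, suppose $I:\R^n\to\R$ is positively homogeneous, comonotonic additive, and satisfies $I(0)=0$. By Theorem 4.2 in \cite{musuma94}, $I$ must then coincide with the Choquet integral with respect to \emph{some} set function; the role of our extra hypotheses (positive homogeneity and $I(0)=0$) is exactly to pin down that this set function vanishes on $\emptyset$ and that no additive ''linear-drift'' term survives, i.e. to exclude the more general comonotonically additive functionals (of the form Choquet integral plus a linear functional on the ''positive and negative parts'' separately) that Theorem 4.2 permits. Concretely, I would define $v$ by $v(A):=I(\1_A)$ for every $A\subseteq N$; this is the only candidate by (\ref{eq:1}), and $v(\emptyset)=I(\1_\emptyset)=I(0)=0$. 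Then I would invoke Theorem 4.2 to conclude $I(f)=\int f\,dv$ for all $f$. Uniqueness of $v$ is then clear: any $v'$ representing $I$ must satisfy $v'(A)=\int\1_A\,dv'=I(\1_A)=v(A)$ by (\ref{eq:1}).

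The main obstacle is that the precise statement of Theorem 4.2 in \cite{musuma94} is not reproduced here, so the real work is matching our three hypotheses against its hypotheses and extracting the set-function representation in the form we need (in particular, being sure that positive homogeneity plus $I(0)=0$ kills the affine/linear corrections that general comonotonic additivity would otherwise allow, so that $I$ is a genuine Choquet integral rather than a Choquet integral shifted by a sign-dependent linear term). Once that correspondence is made explicit, everything else is the routine verification sketched above, and no further computation is required.
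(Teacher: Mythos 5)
The paper offers no proof of this proposition at all---it simply declares it a direct consequence of Theorem 4.2 of \cite{musuma94}---so your route (checking the easy direction by direct verification, deferring the converse to that cited theorem, and deducing uniqueness of $v$ from (\ref{eq:1})) is essentially the same approach, merely with the routine parts written out. The one caveat you already flag yourself: the converse rests entirely on the precise statement of Theorem 4.2, which neither you nor the paper reproduces, so nothing in your sketch goes beyond what the paper itself assumes.
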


Note that the functional $f\mapsto \int f\,dv$  is positively homogeneous and hence is concave
if and only if it is superadditive. Important is Lov\'asz'~\cite{lov83} observation (which we will generalize in Section~\ref{sec:union-closed-systems}):

\begin{proposition}\label{prop:mono}
The functional $f\mapsto \int f\,dv$ is concave if and only if $v$ is
supermodular, {\it i.e.}, if $v$ satisfies the inequality 
\begin{equation}\label{eq:supermod}
v(F\cup G) + v(F\cap G) \geq v(F) + v(G) \quad\mbox{for all $F,G\subseteq N$.} 
\end{equation}
\end{proposition}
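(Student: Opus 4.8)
The plan is to prove the equivalence via the M\"obius representation \eqref{eq:mobcho}, namely $\int f\,dv = \sum_{F\subseteq N}\beta_F\min_{i\in F}f_i$, where $\beta$ is the M\"obius inverse of $v$ on $(2^N,\subseteq)$. The starting point is the elementary observation that for any \emph{fixed} subset $F$, the map $f\mapsto\min_{i\in F}f_i$ is concave on $\R^n$ (being a minimum of linear functionals). Hence if all $\beta_F\geq 0$, the Choquet integral is a non-negative combination of concave functions and is therefore concave; this already shows one direction provided we can translate ``$\beta_F\geq 0$ for all $F$'' into supermodularity. That translation is not quite literally true — supermodularity corresponds to $\beta$ being non-negative only in a weaker, cumulative sense — so the argument has to be done more carefully, working with the inequality \eqref{eq:supermod} directly rather than with pointwise signs of $\beta$.

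For the direction ``$v$ supermodular $\Rightarrow$ concave'', I would argue as follows. Fix $f,g\in\R^n_+$ and set $h:=\tfrac12(f+g)$ (positive homogeneity then gives the general midpoint/convex-combination statement). Using the layer-cake formula \eqref{eq:ccho}, write each of $\int f\,dv$, $\int g\,dv$, $\int h\,dv$ as $\int_0^\infty v(\cdot)\,d\alpha$ of the respective superlevel sets. The key pointwise fact is that for the superlevel sets $A_\alpha:=\{i:f_i\geq\alpha\}$, $B_\alpha:=\{i:g_i\geq\alpha\}$, $C_\alpha:=\{i:h_i\geq\alpha\}$ one has the set inclusions $A_\alpha\cap B_\alpha\subseteq C_\alpha\subseteq A_\alpha\cup B_\alpha$ after a reparametrization — more precisely, comparing levels one shows $\int_0^\infty\bigl(v(A_\alpha)+v(B_\alpha)\bigr)d\alpha$ can be matched, level by level, against $\int_0^\infty\bigl(v(A_\alpha\cap B_\alpha)+v(A_\alpha\cup B_\alpha)\bigr)d\alpha$, and then supermodularity \eqref{eq:supermod} applied pointwise in $\alpha$ yields $\int f\,dv+\int g\,dv\leq 2\int h\,dv$. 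A cleaner route to the same end: reduce to the case where $f$ and $g$ have a common refinement of level sets (a finite chain $\emptyset=S_0\subset S_1\subset\cdots\subset S_k=N$), on which both $\int f\,dv$ and $\int g\,dv$ are explicit telescoping sums of the $v(S_j)$; since restricted to the chain everything is comonotonic, the only place non-comonotonicity of $f,g$ enters is through a single ``crossing'' of two consecutive blocks, and handling that crossing is exactly one application of \eqref{eq:supermod}. This reduction step — showing it suffices to check superadditivity across a single adjacent transposition of level-set blocks — is the part I expect to be the main obstacle, since one must verify that iterating such transpositions connects any pair $(f,g)$ to a comonotonic pair without ever increasing $\int f\,dv+\int g\,dv$.

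For the converse, ``concave $\Rightarrow$ supermodular'', I would specialize the integrand. Given $F,G\subseteq N$, choose $f,g$ to be suitable $\{0,1,2\}$-valued (or $\{0,1\}$-valued) weightings so that, via \eqref{eq:1} and comonotonic-type computations, $\int f\,dv$, $\int g\,dv$, and $\int(f+g)\,dv$ reduce to the four quantities $v(F),v(G),v(F\cup G),v(F\cap G)$. Concretely, take $f=\1_F+\1_{F\cap G}$ and $g=\1_G+\1_{F\cap G}$; then $f+g=\1_{F\cup G}+\1_{F\cap G}+\1_{F}+\1_{G}$ arranged so that its superlevel sets are $F\cup G$ at level $1$ and $F\cap G$ at level $2$ (using $\1_F+\1_G=\1_{F\cup G}+\1_{F\cap G}$), giving $\int(f+g)\,dv=v(F\cup G)+v(F\cap G)$, while $\int f\,dv=v(F)+v(F\cap G)$ and $\int g\,dv=v(G)+v(F\cap G)$; superadditivity $\int(f+g)\,dv\geq\int f\,dv+\int g\,dv$ is then precisely \eqref{eq:supermod}. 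Finally I would invoke the remark already recorded in the excerpt — that positive homogeneity makes concavity and superadditivity equivalent for this functional — to package the two implications into the stated ``if and only if''.
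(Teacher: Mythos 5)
The paper itself offers no proof of Proposition~\ref{prop:mono} (it is quoted as Lov\'asz' result), so your attempt has to stand on its own. The direction ``concave $\Rightarrow$ supermodular'' is essentially fine, up to an arithmetic slip: with your choice $f=\1_F+\1_{F\cap G}$, $g=\1_G+\1_{F\cap G}$ one has $f+g=\1_{F\cup G}+3\,\1_{F\cap G}$, so $\int(f+g)\,dv=v(F\cup G)+3v(F\cap G)$, not $v(F\cup G)+v(F\cap G)$ as you assert; superadditivity then still reduces to \eqref{eq:supermod} after cancelling $2v(F\cap G)$, but the cleaner and standard choice is simply $f=\1_F$, $g=\1_G$, for which $f+g=\1_{F\cup G}+\1_{F\cap G}$ and \eqref{eq:ccho} gives $\int(f+g)\,dv=v(F\cup G)+v(F\cap G)$ at once.

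The genuine gap is in the hard direction ``supermodular $\Rightarrow$ concave''. Your first route cannot work as described: at a fixed level $\alpha$ the union $A_\alpha\cup B_\alpha$ and intersection $A_\alpha\cap B_\alpha$ are the level sets of $f\vee g$ and $f\wedge g$, not of $\tfrac12(f+g)$, so applying \eqref{eq:supermod} under the integral sign only yields $\int f\,dv+\int g\,dv\le\int(f\vee g)\,dv+\int(f\wedge g)\,dv$; since the Choquet integral is not additive (and $f\vee g$, $f\wedge g$ need not be comonotonic), there is no way to convert the right-hand side into $\int(f+g)\,dv$, and the level sets of $\tfrac12(f+g)$ are in general neither unions nor intersections of those of $f$ and $g$, so no pointwise-in-$\alpha$ use of supermodularity gives superadditivity. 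Your second route (common chain refinement plus adjacent transpositions) is indeed the substance of Lov\'asz' argument, but you explicitly leave its key reduction unverified, and that reduction \emph{is} the theorem; as written, the implication is not proved. A standard way to close the gap, in the spirit of the paper's Lemma~\ref{l.Choquet-upper-integral} and Monge algorithm: order $f_{i_1}\ge\cdots\ge f_{i_n}$, write $\int f\,dv$ as the telescoping sum $\sum_k(f_{i_k}-f_{i_{k+1}})v(\{i_1,\dots,i_k\})$, and show that supermodularity makes the greedy vector $x_{i_k}:=v(\{i_1,\dots,i_k\})-v(\{i_1,\dots,i_{k-1}\})$ feasible for $\{x\mid x(S)\ge v(S)\ \forall S,\ x(N)=v(N)\}$; then $\int f\,dv=\min\{\langle f,x\rangle\mid x(S)\ge v(S)\ \forall S,\ x(N)=v(N)\}$ is a pointwise minimum of linear functionals, hence concave.
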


\medskip
Since belief functions (relative to $(2^N,\subseteq)$) are supermodular, we find:
\begin{itemize}
\item  $f\mapsto \int f\, dv$ is positively homogeneous and superadditive for any $v\in \cV^+$ and extends the set function $v$ ({\it via}  $v(F) = \int 1_F \,dv$ for all $F\subseteq N$).
\end{itemize}

It turns out that a positively homogeneous and superadditive functional with
the extension property may not exist for a general ordered system
$(\cF,\preceq)$. We will show, however, that a well-defined best approximation
to the extension property always exists, which allows us to introduce a general
Choquet integral in analogy with Riemann sums.

\section{Integrals}
We now construct the discrete Choquet integral for an ordered system $(\cF,\preceq)$ in several steps and first consider belief functions.

\subsection{Upper integrals}\label{sec:upper-integrals}
An \emph{upper integral} for the belief function $v\in \cV^+$ is a non-negative,
positively homogeneous and superadditive functional $[v]:\R^N\to \R_+$ that
dominates $v$. In other words, the upper integral $[v]$ has the properties
\begin{itemize}
\item[(i)] $[v](\lambda f) = \lambda[v](f)\geq 0$\; for all scalars $\lambda\geq 0$.
\item[(ii)] $[v](f+g) \geq [v](f) +[v](g)$\; for all $f,g\in \R^N_+$.
\item[(iii)] $[v](\1_F) \geq v(F)$\; for all $F\in \cF$.
\end{itemize}

\medskip
The key observation is that the class of upper integrals of $v\in \cV^+$ possesses a unique lower envelope $v^*$. To see this, we introduce the polyhedron
\begin{equation}\label{eq.core}
   \core(v) := \{x\in \R^N_+\mid x(F)\geq v(F), \;\forall F\in \cF\}.
\end{equation}
\begin{lemma}\label{l.Choquet-upper-integral}
For any $v\in \cV^+$, there is a unique upper integral $v^*$ that provides a
lower bound for all upper integrals $[v]$ in the sense
$$
    v^*(f) \leq [v](f) \quad\mbox{for all $f\in \R^N_+$.}
$$
Moreover, one has
\begin{eqnarray*}
 v^*(f) &=& \min \{\langle f,x\rangle  \mid x\in \R^N_+, x(F)\geq v(F), \;\forall F\in \cF \}\\
 &=&\max \Big\{\langle v,y\rangle \mid y\in \R^{\cF}_+, \sum_{F\in \cF}y_F\1_F\leq f\Big\}.
\end{eqnarray*}
\end{lemma}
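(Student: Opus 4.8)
The plan is to recognize the two displayed formulas as a primal--dual pair of linear programs, obtain their equality from strong LP duality (which simultaneously produces a well-defined functional $v^*$), then verify that $v^*$ is an upper integral, that it lies below every upper integral, and finally that it is the unique such functional. So the four steps are: (1) LP duality; (2) $v^*$ is an upper integral; (3) $v^*\le[v]$ for every upper integral $[v]$; (4) uniqueness.

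For step (1): the problem $\min\{\langle f,x\rangle : x\ge 0,\ x(F)\ge v(F)\ \forall F\in\cF\}$ has constraint matrix the $\cF\times N$ incidence matrix whose rows are the $\1_F$ (since $x(F)=\langle\1_F,x\rangle$), so its LP dual is precisely $\max\{\langle v,y\rangle : y\in\R^\cF_+,\ \sum_{F\in\cF}y_F\1_F\le f\}$. The primal is feasible, because any sufficiently large constant vector lies in $\core(v)$ (each $F\in\cF$ is nonempty, hence $|F|\ge1$), and it is bounded below by $0$ since $f,x\ge0$; the dual is feasible via $y=0$. Strong duality then gives that both optima are attained and equal, and I set $v^*(f)$ to be this common value, which also shows $v^*(f)\ge 0$.

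For step (2) I argue from the $\max$-representation. Positive homogeneity: for $\lambda>0$ the feasible set for $\lambda f$ is $\lambda$ times that for $f$; and $v^*(0)=0$ because $\sum_F y_F\1_F\le 0$ with $y\ge 0$ forces $y=0$ (every $F$ is nonempty). Superadditivity: if $y,y'$ are optimal for $f,g$ then $y+y'$ is feasible for $f+g$, so $v^*(f+g)\ge\langle v,y\rangle+\langle v,y'\rangle=v^*(f)+v^*(g)$. Domination (iii): taking $y$ the unit vector supported on $F$ gives $v^*(\1_F)\ge v(F)$. I also record the elementary fact that any upper integral $[v]$ is monotone, since $g\le h$ gives $[v](h)\ge[v](g)+[v](h-g)\ge[v](g)$. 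Now for step (3), the crux: let $[v]$ be an arbitrary upper integral and $y^*$ an optimizer of the $\max$-problem for $f$, so $y^*\ge 0$ and $\sum_F y^*_F\1_F\le f$. Using (iii) of $[v]$ on each $F$, then positive homogeneity and (iterated) superadditivity, then monotonicity,
\[
v^*(f)=\sum_{F\in\cF}y^*_F\,v(F)\ \le\ \sum_{F\in\cF}y^*_F\,[v](\1_F)=\sum_{F\in\cF}[v]\bigl(y^*_F\1_F\bigr)\ \le\ [v]\Bigl(\sum_{F\in\cF}y^*_F\1_F\Bigr)\ \le\ [v](f).
\]
Thus $v^*\le[v]$, and step (4) is then formal: if $v'$ is a second upper integral dominated by all upper integrals, then $v'\le v^*$ and, applying the lower-bound property of $v^*$ to $v'$, also $v^*\le v'$, so $v^*=v'$.

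I expect step (3) to be the only real content, and within it the point to get right is that positive homogeneity together with superadditivity forces $[v]$ to be monotone --- this is exactly what allows one to pass from the constraint $\sum_F y^*_F\1_F\le f$ to a bound in terms of $[v](f)$. The LP-duality part is routine once feasibility and boundedness are checked, and that check uses only that the members of $\cF$ are nonempty (in particular, the hypothesis $v\in\cV^+$, i.e. $v\ge0$, is not needed for this lemma).
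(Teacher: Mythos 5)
Your proof is correct, and it reaches the same two LP representations as the paper, but the mechanism for the crucial lower-bound step differs. The paper proves $v^*\le[v]$ by convex analysis: it associates with each upper integral $[v]$ its kernel $\ker[v]=\{x\in\R^N_+\mid \langle f,x\rangle\ge[v](f)\ \forall f\in\R^N_+\}$, invokes the standard representation $[v](f)=\min_{x\in\ker[v]}\langle f,x\rangle$ of a non-negative, positively homogeneous, superadditive functional, and then observes $\ker[v]\subseteq\core(v)$ via property (iii), so that the minimum over $\core(v)$ can only be smaller; LP duality then supplies the max-form, and the verification that $v^*$ is itself an upper integral is left as routine. You instead argue directly on the dual optimizer $y^*$: from (iii), positive homogeneity, iterated superadditivity, and the monotonicity of $[v]$ on $\R^N_+$ (which you correctly derive from superadditivity plus non-negativity), you get $v^*(f)=\langle v,y^*\rangle\le[v]\bigl(\sum_F y^*_F\1_F\bigr)\le[v](f)$. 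This buys you something: it avoids the appeal to the Rockafellar-type support representation (and the implicit closedness/regularity it presupposes for the equality $[v](f)=\min_{x\in\ker[v]}\langle f,x\rangle$), using only finite-dimensional LP duality and elementary inequalities; it also makes explicit the verification of the upper-integral properties of $v^*$ and the uniqueness step, which the paper compresses. What the paper's kernel viewpoint buys in exchange is a geometric picture (every upper integral is the support functional of a subset of $\core(v)$, and $v^*$ is the one attached to $\core(v)$ itself) that is reused conceptually elsewhere. Your closing remark that non-negativity of $v$ is not actually needed for this lemma is also accurate: the argument only uses that the members of $\cF$ are non-empty and that $v$ is finite.
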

\begin{proof}
Associate with the upper integral $[v]$ its \emph{kernel} as the closed convex set
$$
\ker[v] := \{x\in \R^N_+\mid \langle f,x\rangle \geq [v](f) \;\forall f\in \R^N_+ \}.
$$
Since $[v]$ is positively homogeneous and superadditive (properties (i) and (ii)), standard results from convex analysis (see, e.g., Rockafellar~\cite{roc70}) yield
$$
     [v](f) = \min_{x\in \ker[v]} \langle f,x\rangle  \quad\mbox{for all $f\in \R^N_+$.}
$$
By property (iii), we have $\ker[v]\subseteq \core(v)$, which implies
$$
[v](f) \geq  \min \{\langle f,x\rangle  \mid x\in \R^N_+, x(F)\geq v(F) \;\forall F\in \cF \} =: v^*(f).
$$
Linear programming duality yields the representation
\begin{equation}\label{eq.Riemann-sum}
      v^*(f) = \max \Big\{\langle v,y\rangle \mid y\in \R^{\cF}_+, \sum_{F\in \cF}y_F\1_F\leq f\Big\}.
\end{equation}

It is straightforward to verify that $f\mapsto v^*(f)$ is an upper integral for $v$.
\end{proof}

\begin{remark}
The representation (\ref{eq.Riemann-sum}) may be thought of as a Riemann sum
approximation of $v^*(f)$: One approximates $f$ from below by ''step functions''
$\sum_{F\in \cF}y_F\1_F$ and optimizes over their ''content'' $\langle
v,y\rangle$. The same approach has been taken by Lehrer, who
calls it the concave integral \cite{leh09,lete08}, with the difference that
$\cF=2^N$ and that $v$ can be any capacity. 
\end{remark}

\subsection{The Choquet integral}\label{sec:Choquet-integral}
We call the upper integral $v^*$ established in
Lemma~\ref{l.Choquet-upper-integral} the \emph{Choquet integral} of the belief
function $v\in \cV^+$ and henceforth use the notation
$$
\int_{\cF} f \, dv := v^*(f).
$$

\begin{remark} The name of the integral and its notation will be justified later (\emph{cf.} formula (\ref{eq.classical-Choquet}) in Section~\ref{sec.w-u-closed})  as a generalization of the classical Choquet integral, \emph{i.e.}, when
$(\cF,\preceq)=(2^N\setminus\{\emptyset\},\subseteq)$.
\end{remark}

\begin{proposition}\label{p.subadditivity} The functional $v\mapsto \int_\cF f\,dv$ is subadditive on $\cV^+$.
\end{proposition}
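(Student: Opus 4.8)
The plan is to work directly from the max-representation of the Choquet integral established in Lemma~\ref{l.Choquet-upper-integral}, namely
$$
\int_{\cF} f\, dv = v^*(f) = \max\Big\{\langle v,y\rangle \mid y\in \R^{\cF}_+,\ \sum_{F\in\cF} y_F\1_F \le f\Big\}.
$$
Fix $f\in\R^N_+$ and take two belief functions $u,v\in\cV^+$. First I would show $(u+v)^*(f) \le u^*(f) + v^*(f)$. Let $y\in\R^{\cF}_+$ be an optimal solution for the program defining $(u+v)^*(f)$, so $\sum_F y_F\1_F \le f$ and $(u+v)^*(f) = \langle u+v,y\rangle = \langle u,y\rangle + \langle v,y\rangle$. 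The same $y$ is feasible for both the program defining $u^*(f)$ and the one defining $v^*(f)$ (the feasible region depends only on $f$, not on the belief function), hence $\langle u,y\rangle \le u^*(f)$ and $\langle v,y\rangle \le v^*(f)$. Adding these two inequalities gives $(u+v)^*(f)\le u^*(f)+v^*(f)$, which is exactly subadditivity of $w\mapsto \int_\cF f\,dw$ on $\cV^+$ for the fixed integrand $f$. Since $f$ was arbitrary, this is the claim.

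One should note that $u+v\in\cV^+$ whenever $u,v\in\cV^+$, so that $(u+v)^*$ is well defined by Lemma~\ref{l.Choquet-upper-integral}; this is immediate because $\cV^+$ is a convex cone (it is the cone generated by the simple functions $\zeta^1,\dots,\zeta^m$), closed under addition. It is also worth recording the complementary observation that superadditivity, $(u+v)^*(f)\ge u^*(f)+v^*(f)$, would follow from the \emph{min}-representation by an analogous argument on $\core(u)+\core(v)\subseteq\core(u+v)$; combining the two would show that $w\mapsto\int_\cF f\,dw$ is in fact additive on $\cV^+$. However, the statement as phrased only asks for subadditivity, so the one-line argument above suffices, and I would not belabor the additivity point unless it is used later.

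I do not anticipate a genuine obstacle here: the whole proof is a matter of observing that, in the linear-programming dual form, the feasible polytope $\{y\in\R^{\cF}_+ : \sum_F y_F\1_F\le f\}$ is independent of the measure while the objective $\langle w,y\rangle$ is linear in $w$, so the maximum is a sublinear (in particular subadditive) function of $w$. The only point requiring a word of care is that we must use the \emph{max} (primal-feasibility-in-$y$) formulation rather than the \emph{min} formulation, since it is the former whose constraint set does not move with $w$; picking the right one of the two dual expressions from Lemma~\ref{l.Choquet-upper-integral} is the entire content of the argument.
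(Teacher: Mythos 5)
Your main argument is correct, but it takes the dual route to the one in the paper. The paper works with the \emph{min} (core) representation, which is the definition of $v^*$: if $x_u$ and $x_v$ are optimal for the programs with right-hand sides $u$ and $v$, then $x_u+x_v$ is feasible for the program with right-hand side $u+v$, so the minimum for $u+v$ is at most $\langle f,x_u\rangle+\langle f,x_v\rangle$. You instead use the \emph{max} (Riemann-sum) representation from Lemma~\ref{l.Choquet-upper-integral}: the feasible polyhedron $\{y\in\R^{\cF}_+\mid \sum_F y_F\1_F\le f\}$ does not depend on the valuation and the objective $\langle w,y\rangle$ is linear in $w$, so an optimal $y$ for $u+v$ is simultaneously feasible for the $u$- and $v$-programs. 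Both are one-line LP arguments; the paper's has the small advantage of using only the defining minimization (no appeal to the duality part of Lemma~\ref{l.Choquet-upper-integral}), while yours makes transparent that $w\mapsto v^*(f)$ is a maximum of linear functions of $w$ over a fixed set, hence sublinear.

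Your closing aside, however, is wrong and should be dropped. The inclusion $\core(u)+\core(v)\subseteq\core(u+v)$ \emph{enlarges} the feasible set of the $u+v$ minimization, and minimizing over a larger set can only lower the value; since the minimum of $\langle f,\cdot\rangle$ over the Minkowski sum splits as $u^*(f)+v^*(f)$, this argument yields $(u+v)^*(f)\le u^*(f)+v^*(f)$ again --- it is precisely the paper's proof of subadditivity, not a complementary superadditivity bound. Indeed, $v\mapsto\int_\cF f\,dv$ is \emph{not} additive on $\cV^+$ in general; this is exactly why Corollary~\ref{c.Choquet-representation} needs the hypothesis that the Monge algorithm is exact on simple functions. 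For a counterexample take $N=\{1,2,3\}$, $\cF=\{\{1,2\},\{2,3\}\}$ (trivially ordered or ordered by containment), $u=\zeta^{\{1,2\}}$, $v=\zeta^{\{2,3\}}$, $f=\1_N$: then $\int_\cF f\,du=\int_\cF f\,dv=1$ but $\int_\cF f\,d(u+v)=1<2$, since the constraint at the element $2$ forces $y_{\{1,2\}}+y_{\{2,3\}}\le 1$.
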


\begin{proof}
For any $v,w\in \cV^+$ and $f\in \R_+^N$, we have
\begin{eqnarray*}
\int_{\cF} f \,d(v+w)&=& \min\{ \(f,x\)\mid x\geq 0, x(F)\geq v(F)+w(F) \;\forall F\in \cF\}\\
&\leq&  \min\{ \(f,x\)\mid x\geq 0, x(F)\geq v(F)\;\forall F\in \cF\}\\
&& + \min\{ \(f,x\)\mid x\geq 0, x(F)\geq w(F) \;\forall F\in \cF\}\\
&=& \int_\cF f\, dv + \int_\cF f\,dw.
\end{eqnarray*}
\end{proof}

\medskip
We extend the Choquet integral to arbitrary valuations $v\in \cV$ via
$$
   \int_\cF f \, dv := \int_\cF f \, dv^+ - \int_\cF f \, dv^- \quad\mbox{for all $f\in \R^N$.}
$$
Note that the Choquet integral is positively homogeneous for any valuation.

\subsubsection{Choquet integrals of arbitrary weightings}\label{sec:arbitrary-f}
We call the Choquet integral \emph{strong} if it satisfies
\begin{equation}\label{eq.strong-basis}
      \int_\cF  (f+ \lambda\1_N) \, dv = \int_\cF  f \, dv +\lambda\int \1_N\, d v
\end{equation}
for all $\lambda\geq 0$, $f\in \R_+^N$, $v\in \cV^+$. Given an arbitrary
weighting $f:N\to\R$ bounded from below, we now select some $\lambda\geq 0$ so that
$\ov{f}=f+\lambda\1_N\geq 0$ holds and set
\begin{equation}\label{eq.arbitrary-integral}
\int_\cF  f \, dv := \int_\cF \ov{f} \, dv -  \lambda\int_\cF  \1_N \, dv \quad\mbox{for all $v\in \cV$.}
\end{equation}
In the case of strongness, it is easy to see that (\ref{eq.arbitrary-integral})
is well-defined (\emph{i.e.}, independent of the particular $\lambda$ chosen).

\section{The Monge algorithm}\label{sec:Monge}
We now present a heuristic algorithm for the computation of the Choquet integral
relative to the ordered system $(\cF,\preceq)$, which generalizes the well-known \emph{north-west corner rule} for the solution of assignment problems. As usual, we denote the empty string by $\Box$. Also, we set
$$
\cF(X) := \{F\in \cF\mid F\subseteq  X\} \quad\mbox{for all $X\subseteq N$.}
$$

 \medskip
 Given the non-negative weighting $f\in \R^N_+$, consider the following procedure:

\bigskip
\textsc{Monge Algorithm (MA):}

\medskip
\begin{itemize}
\item[(M0)] Initialize: $X\leftarrow N$, $\cM\leftarrow \emptyset$, $c\leftarrow f$, $y\leftarrow 0$, $\pi\leftarrow \Box$;
\item[(M1)] Let $M=F_i\in \cF(X)$ be the set with minimal index $i$ and choose an element $p\in M$ of minimal weight $c_p=\min_{j\in M}c_j$;
\item[(M2)] Update: $X\leftarrow X\setminus\{p\}$, $\cM\leftarrow\cM\cup\{M\}$,
  $y_{M}\leftarrow c_p$, $c\leftarrow (c-c_p\1_{M})$, $\pi\leftarrow (\pi p)$;
\item[(M3)] If $\cF(X) = \emptyset$, \text{Stop} and Output $(\cM,y,\pi)$. Else goto (M1);
\end{itemize}

It is straightforward to check that in each iteration of (MA) the current vector $y$ is non-negative with the property
$$
\sum_{F\in \cF(X)}y_F \1_F \leq c.
$$
So we find:

\begin{lemma}\label{lem:ma}
The output $\ov{y}$ of the Monge algorithm satisfies for any input $f\in \R_+^N$
$$
         \ov{y}\geq 0 \quad\mbox{and}\quad \sum_{F\in \cF} \ov{y}_F \1_F\leq f.
$$

\end{lemma}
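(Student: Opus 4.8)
The plan is to prove the two assertions about the Monge algorithm output $\ov y$ by tracking an invariant through the loop, exactly as the text hints. Concretely, I will argue that the following property is maintained at the top of each iteration: $y \ge 0$ componentwise, $c \ge 0$ componentwise, and $\sum_{F \in \cF(X)} y_F \1_F \le c$ as vectors in $\R^N$ (here $X$, $\cM$, $c$, $y$ denote the current values of the algorithm's variables). Since on termination $\cF(X) = \emptyset$ and $c$ has been obtained from the initial $f$ by successive subtractions of the nonnegative multiples $c_p \1_M$, one recovers $f = c + \sum_{F \in \cM} y_F \1_F = c + \sum_{F \in \cF} \ov y_F \1_F \ge \sum_{F \in \cF} \ov y_F \1_F$, together with $\ov y \ge 0$, which is the claim.

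First I would verify the base case: after (M0) we have $X = N$, $c = f \ge 0$, $y = 0$, so $\sum_{F \in \cF(X)} y_F \1_F = 0 \le c$ trivially, and $y \ge 0$. Next I would carry out the inductive step across one pass through (M1)--(M2). Let $M = F_i$ be the selected set and $p \in M$ the chosen element with $c_p = \min_{j \in M} c_j \ge 0$. Set $c' := c - c_p \1_M$, $X' := X \setminus \{p\}$, and let $y'$ agree with $y$ except $y'_M = c_p$ (note $M \notin \cM$ before this step since otherwise $y_M$ was already set and $M \in \cF(X)$ would mean $M$ is reconsidered — but the index-minimality in (M1) together with removal of $p \in M$ from $X$ guarantees $M \notin \cF(X')$, so each set is selected at most once; I should spell this out). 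First, $c' \ge 0$: for $j \notin M$ we have $c'_j = c_j \ge 0$, and for $j \in M$ we have $c'_j = c_j - c_p \ge 0$ by the choice of $p$. Second, $y' \ge 0$ since $y \ge 0$ and $c_p \ge 0$. Third, and this is the one point needing a short argument, I must check $\sum_{F \in \cF(X')} y'_F \1_F \le c'$. Split the sum: the sets $F \in \cF(X')$ are exactly those $F \in \cF$ with $F \subseteq X'$, equivalently $F \subseteq X$ and $p \notin F$; so $\cF(X') = \cF(X) \setminus \{F \in \cF(X) : p \in F\}$, and in particular $M \notin \cF(X')$. Hence $\sum_{F \in \cF(X')} y'_F \1_F = \sum_{F \in \cF(X'), F \ne M} y_F \1_F \le \sum_{F \in \cF(X), p \notin F} y_F \1_F$. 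Evaluating coordinatewise: at a coordinate $j \ne p$, the right side is at most $\sum_{F \in \cF(X)} y_F \1_F(j) \le c_j = c'_j$; at coordinate $p$, every $F$ in the restricted sum omits $p$, so the sum contributes $0 \le c'_p$. This establishes the invariant.

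The main obstacle, such as it is, is bookkeeping rather than mathematics: one must be careful that (i) the variable names in the invariant refer to current loop values, (ii) each member of $\cF$ is added to $\cM$ at most once so that $\ov y$ is well-defined and the final sum $\sum_{F \in \cM} \ov y_F \1_F$ equals $\sum_{F \in \cF} \ov y_F \1_F$ (components of $\ov y$ indexed by sets never selected are $0$ by initialization), and (iii) the algorithm indeed terminates, which is clear since $|X|$ strictly decreases each pass (we remove $p$ from $X$) so after at most $n$ iterations $X$, and hence $\cF(X)$, is empty — actually it suffices that $\cF(X)$ eventually becomes empty, which the shrinking of $X$ guarantees. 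None of these steps requires computation beyond the coordinatewise comparison above, so the proof is short; I would present it as a one-paragraph induction establishing the displayed invariant, followed by one sentence deriving the lemma from the terminal state.
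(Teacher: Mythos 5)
Your proof is correct and takes essentially the same route as the paper, which simply asserts the loop invariant ("$y\geq 0$ and $\sum_{F\in\cF(X)}y_F\1_F\leq c$") as straightforward to check; the load-bearing parts of your argument — that $c\geq 0$ is preserved by the choice $c_p=\min_{j\in M}c_j$ and that at termination $f=c+\sum_{F\in\cF}\ov{y}_F\1_F$ — are exactly what is needed and are established correctly. One small slip worth noting: in verifying the auxiliary invariant $\sum_{F\in\cF(X')}y'_F\1_F\leq c'$ you write $c_j=c'_j$ for all $j\neq p$, which fails for $j\in M\setminus\{p\}$ (there $c'_j=c_j-c_p$); the inequality nevertheless holds because every $F$ with $y_F>0$ already contains a previously removed element and so lies outside $\cF(X')$, making that sum vanish — and in any case this piece of the invariant is not used in your final derivation, which rests only on $c\geq 0$ and the accounting identity.
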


Given any valuation $v\in \cV$, associate with the output $(y,\pi)$ of MA the quantity
$$
     [f](v) := \langle v,y\rangle  = \sum_{F\in \cF} y_F v(F).
$$
Since $(y,\pi)$ does not depend on $v$, it is clear that $v\mapsto [f](v)$ is
a linear functional on $\cV$ (which may depend not only on $f$ and $\cF$ but also on the indexing of $\cF$ and the choice of $p$ in step (M1), however).

\begin{theorem}\label{t.Monge} The following statements are equivalent:
\begin{itemize}
\item[(a)] $[f](\zeta^i) = \int_\cF f \,d\zeta^i$\; for all $i=1,\ldots,m$.
\item[(b)] $[f](v) = \int_\cF f \, dv$\; for all $v\in \cV$.
\end{itemize}
\end{theorem}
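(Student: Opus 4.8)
The equivalence is between a statement about the incidence basis $\{\zeta^1,\dots,\zeta^m\}$ and a statement about all valuations, so the natural strategy is to exploit the linearity of $v\mapsto[f](v)$ together with the piecewise-linearity (in $v$) of $v\mapsto\int_\cF f\,dv$. The direction (b)$\Rightarrow$(a) is immediate since each $\zeta^i$ lies in $\cV$. For (a)$\Rightarrow$(b), I would first establish the inequality $[f](v)\le\int_\cF f\,dv$ for every belief function $v\in\cV^+$: by Lemma~\ref{lem:ma}, the output $\ov y$ of MA is feasible for the maximization program in (\ref{eq.Riemann-sum}), so $[f](v)=\langle v,\ov y\rangle\le v^*(f)=\int_\cF f\,dv$. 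Next I would use the decomposition of belief functions into simple functions: writing $v=\sum_i\beta_i\zeta^i$ with all $\beta_i\ge0$, linearity of $[f]$ gives $[f](v)=\sum_i\beta_i[f](\zeta^i)=\sum_i\beta_i\int_\cF f\,d\zeta^i$ by hypothesis~(a), and superadditivity (Lemma~\ref{l.Choquet-upper-integral} makes $v^*$ superadditive, and it is positively homogeneous) yields $\sum_i\beta_i\int_\cF f\,d\zeta^i=\sum_i\beta_i(\zeta^i)^*(f)\le\big(\sum_i\beta_i\zeta^i\big)^*(f)=\int_\cF f\,dv$. So far this only reproves the inequality, so the real content must come from combining the two bounds to force equality on $\cV^+$.

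To get equality on $\cV^+$, I would argue as follows. Fix $v\in\cV^+$ and let $x^\star$ be an optimal solution of the minimization program defining $v^*(f)=\int_\cF f\,dv$, and let $\ov y$ be the MA output. The complementary-slackness / duality relation between the two programs in Lemma~\ref{l.Choquet-upper-integral} says that $\langle v,\ov y\rangle=\langle f,x^\star\rangle$ would follow if $\ov y$ were optimal for the max program; equivalently, it suffices to exhibit a primal-feasible $x^\star$ with $\langle f,x^\star\rangle=[f](v)$, or to show directly that $\sum_i\beta_i\int_\cF f\,d\zeta^i=\int_\cF f\,dv$. Here is where hypothesis~(a) does the work: by (a), $[f](\zeta^i)=\int_\cF f\,d\zeta^i=(\zeta^i)^*(f)$, and I claim the two chains of inequalities above are in fact tight. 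Indeed, from $[f](v)=\sum_i\beta_i[f](\zeta^i)$ and $[f](\zeta^i)\le(\zeta^i)^*(f)$ we only get $\le$; to upgrade to equality I would instead run MA's bookkeeping: the algorithm produces the \emph{same} vector $\ov y$ regardless of $v$, and one checks that for each simple function $\zeta^i$ the feasible $\ov y$ achieves the optimum in (\ref{eq.Riemann-sum}) \emph{precisely when} (a) holds; then since $\langle v,\ov y\rangle=\sum_i\beta_i\langle\zeta^i,\ov y\rangle=\sum_i\beta_i[f](\zeta^i)=\sum_i\beta_i(\zeta^i)^*(f)$, it remains to show this last sum equals $v^*(f)$. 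That identity, $v^*(f)=\sum_i\beta_i(\zeta^i)^*(f)$ when $v=\sum_i\beta_i\zeta^i$ with $\beta_i\ge0$ \emph{and} the common maximizer $\ov y$ is simultaneously optimal for every $\zeta^i$, follows because a single $y$ that is optimal for each $(\zeta^i)^*(f)$ is optimal for the sum.

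The cleanest route, which I would ultimately adopt, is: (i) always $[f](v)\le\int_\cF f\,dv$ for $v\in\cV^+$ by feasibility of $\ov y$ (Lemma~\ref{lem:ma}); (ii) under (a), $\ov y$ is optimal for each $\zeta^i$, i.e.\ $\langle\zeta^i,\ov y\rangle=(\zeta^i)^*(f)$; (iii) for $v=\sum_i\beta_i\zeta^i\in\cV^+$, $\langle v,\ov y\rangle=\sum_i\beta_i\langle\zeta^i,\ov y\rangle=\sum_i\beta_i(\zeta^i)^*(f)\ge\big(\sum_i\beta_i\zeta^i\big)^*(f)=v^*(f)$ — wait, superadditivity gives the \emph{wrong} direction here, so instead I use that a common optimal $y$ for all summands is optimal for the sum, giving $\sum_i\beta_i(\zeta^i)^*(f)=v^*(f)$ directly; hence (i) forces $[f](v)=\int_\cF f\,dv$ on $\cV^+$. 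Finally, extend to all of $\cV$: given arbitrary $v\in\cV$, write $v=v^+-v^-$ with $v^\pm\in\cV^+$ as in (\ref{eq.belief-representation}); by linearity of $[f]$ and the definition $\int_\cF f\,dv=\int_\cF f\,dv^+-\int_\cF f\,dv^-$, equality on $\cV^+$ propagates to $v$. \textbf{Main obstacle.} The delicate point is step~(iii): superadditivity of $v^*$ only gives $\sum_i\beta_i(\zeta^i)^*(f)\le v^*(f)$, the opposite of what a naive argument wants, so the proof must genuinely use that the \emph{same} MA-output vector $\ov y$ is simultaneously optimal in the LP (\ref{eq.Riemann-sum}) for every simple function $\zeta^i$ — that common witness is what collapses both inequalities to equalities. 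Making that "common optimal $y$" argument rigorous (and verifying that hypothesis~(a) really forces $\ov y$ to be optimal for each $\zeta^i$, not merely feasible) is the crux.
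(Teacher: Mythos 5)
Your final (``cleanest route'') argument is correct, and its skeleton matches the paper's: (b)$\Rightarrow$(a) is trivial, and for (a)$\Rightarrow$(b) one shows that the MA output $\ov{y}$, feasible by Lemma~\ref{lem:ma} and value-optimal for each $\zeta^i$ by hypothesis (a), yields $\langle v,\ov{y}\rangle=\int_\cF f\,dv$ for every belief function $v=\sum_i\beta_i\zeta^i\in\cV^+$, after which the case of general $v=v^+-v^-$ follows from linearity of $y\mapsto\langle\cdot,y\rangle$ and the definition of the integral on $\cV$. Where you differ is the mechanism of the key step: the paper certifies optimality of $\ov{y}$ for the combined LP by constructing the dual feasible point $\ov{x}=\sum_i\beta_i\ov{x}^i$ from dual optima $\ov{x}^i$ of the simple-function programs and matching values via weak duality, whereas you stay entirely on the maximization side of (\ref{eq.Riemann-sum}) and observe that a point of the common feasible polyhedron $\{y\geq 0:\sum_F y_F\1_F\leq f\}$ which simultaneously maximizes each objective $\langle\zeta^i,\cdot\rangle$ also maximizes any non-negative combination $\langle v,\cdot\rangle$; this is a valid and slightly more elementary argument (no dual solutions needed beyond what Lemma~\ref{l.Choquet-upper-integral} already encodes), while the paper's version has the side benefit of producing explicit dual optima, i.e.\ core elements, for every belief function. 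Two small blemishes in your exploratory paragraphs, neither affecting the adopted route: the superadditivity in Lemma~\ref{l.Choquet-upper-integral} is in the argument $f$, not in $v$ --- as a function of $v$ the integral is \emph{sub}additive (Proposition~\ref{p.subadditivity}) --- so your first-paragraph inequality $\sum_i\beta_i(\zeta^i)^*(f)\leq v^*(f)$ is backwards; ironically, the correct direction $v^*(f)\leq\sum_i\beta_i(\zeta^i)^*(f)$ from subadditivity and positive homogeneity in $v$ would have closed the proof even without the common-witness argument, since combined with $[f](v)\leq v^*(f)$ and hypothesis (a) it sandwiches $[f](v)=v^*(f)$ directly.
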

\begin{proof}
We have to verify the non-trivial implication $(a) \Rightarrow (b)$. Let us call $\ov{y}$
the output of MA for $f$. From Lemma~\ref{lem:ma}, $\langle
\zeta^i,\ov{y}\rangle =\int f \,d\zeta^i$ means that the $\ov{y}$ is optimal for
the linear programs
$$
\max \Big\{\langle \zeta^i,y\rangle \mid y\in \R^{\cF}_+, \sum_{F\in \cF}y_F\1_F\leq f\Big\} \quad(1\leq i\leq m).
$$
Let $\ov{x}^i$ denote optimal solutions for the dual linear programs
$$
\min \{\langle f,x\rangle \mid x\in \R^N_+, x(F)\geq \zeta^i(F),\;\forall F\in \cF\}.
$$

Consider the belief function $v=\sum_{i=1}^m\beta_i \zeta^i\in \cV^+$ with $\beta_i\geq 0$ and set $\ov{x}:=\sum_{i=1}^m \beta_i\ov{x}^i$. Then $\ov{x}$ is feasible for the linear program
$$
\min \{\langle f,x\rangle \mid x\in \R^N_+, x(F)\geq v(F)\;\forall F\in \cF\}
$$ and, in view of $\langle f,\ov{x}\rangle  = \sum_i\beta_i\langle
f,\ov{x}^i\rangle = \sum_i\beta_i\langle\zeta^i,\ov{y}\rangle = \langle v,\ov{y}\rangle $, optimal by linear programming
duality. So we find
$$
    \langle v,\ov{y}\rangle  = \int_\cF f \, dv \quad\mbox{for all belief functions $v\in \cV^+$}
$$
and consequently
\begin{eqnarray*}
 \langle v,\ov{y}\rangle &=&  \langle v^+,\ov{y}\rangle -\langle v^-,\ov{y}\rangle \\
 &=&\int f \, dv^+ - \int_\cF f \, dv^- \;=\;\int_\cF f \, dv
\end{eqnarray*}
for all valuations $\in \cV$.
\end{proof}

 The linearity of the Monge functional $v\mapsto [f](v)$ furnishes a sufficient condition for  the Choquet functional $v\mapsto \int_\cF f \,dv$ to be linear on $\cV$ (and thus to strengthen Proposition~\ref{p.subadditivity}):

\begin{corollary}\label{c.Choquet-representation} Assume that the Monge algorithm computes the Choquet integral for all simple functions $\zeta^i$. Then we have
$$
\int_\cF f \, dv = \sum_{i=1}^m \beta_i\int_\cF f \,d\zeta^i \quad\mbox{for all $v=\D\sum_{i=1}^m\beta_i \zeta^i \in \cV$.}
$$
\end{corollary}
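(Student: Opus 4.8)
The plan is to derive Corollary~\ref{c.Choquet-representation} as a direct consequence of Theorem~\ref{t.Monge} together with the linearity of the Monge functional $v\mapsto [f](v)$ observed just before the statement of that theorem. First I would note the hypothesis: the Monge algorithm computes the Choquet integral for all simple functions, i.e. $[f](\zeta^i)=\int_\cF f\,d\zeta^i$ for $i=1,\dots,m$. This is precisely statement (a) of Theorem~\ref{t.Monge}, so by the theorem we immediately obtain statement (b), namely $[f](v)=\int_\cF f\,dv$ for all $v\in\cV$.

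Next I would expand the left-hand side using linearity. Fix $v=\sum_{i=1}^m\beta_i\zeta^i\in\cV$, and let $\ov y$ be the output of the Monge algorithm on input $f$, so that $[f](w)=\langle w,\ov y\rangle$ for every $w\in\cV$. Since the pairing $\langle\,\cdot\,,\ov y\rangle$ is linear in its first argument,
\begin{equation*}
[f](v)=\langle v,\ov y\rangle=\Big\langle\sum_{i=1}^m\beta_i\zeta^i,\ov y\Big\rangle=\sum_{i=1}^m\beta_i\langle\zeta^i,\ov y\rangle=\sum_{i=1}^m\beta_i\,[f](\zeta^i).
\end{equation*}
Applying statement (b) to the left-hand side and the hypothesis (a) to each term on the right-hand side yields
\begin{equation*}
\int_\cF f\,dv=[f](v)=\sum_{i=1}^m\beta_i\,[f](\zeta^i)=\sum_{i=1}^m\beta_i\int_\cF f\,d\zeta^i,
\end{equation*}
which is exactly the claimed identity.

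There is essentially no obstacle here: the corollary is a formal rewriting of Theorem~\ref{t.Monge}(b) once one records that $[f]$ is linear in its valuation argument. The only point requiring a word of care is that the coefficients $\beta_i$ in the decomposition $v=\sum_i\beta_i\zeta^i$ are unique (the simple functions $\zeta^1,\dots,\zeta^m$ form the incidence basis of $\cV$), so the right-hand side is well defined; but this was already established in Section~\ref{sec:density-belief}. One may also remark, as the paragraph preceding the corollary does, that this linearity strengthens the subadditivity of Proposition~\ref{p.subadditivity} to genuine additivity of $v\mapsto\int_\cF f\,dv$ on $\cV^+$ whenever the Monge algorithm is correct on simple functions, but that observation is not needed for the proof itself.
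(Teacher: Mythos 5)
Your proof is correct and follows exactly the route the paper intends: the hypothesis is statement (a) of Theorem~\ref{t.Monge}, which gives $[f](v)=\int_\cF f\,dv$ for all $v$, and the claimed identity is then just the linearity of $v\mapsto[f](v)=\langle v,\ov y\rangle$ applied to the expansion of $v$ in the incidence basis. The paper leaves this argument implicit (it states the corollary right after noting the linearity of the Monge functional), so your write-up simply makes the same reasoning explicit.
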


\section{Ordering by containment}
We investigate in this section systems under the set-theoretic containment order relation $\subseteq$ and consider the system $(\cF,\subseteq)$. A fundamental observation is a simple expression for the integral relative to simple functions (which is well-known for the classical Choquet integral):

\begin{lemma}\label{lem:min}
Let $(\cF,\subseteq)$ be arbitrary and $f:N\rightarrow \R_+$. Then for any
$F\in \cF$,
\[
\int f \,d\zeta^F = \min_{j\in F}f_j.
\]
\end{lemma}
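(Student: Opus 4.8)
The plan is to compute the value $v^*(\1$-free$)$ directly from the max-formulation in Lemma~\ref{l.Choquet-upper-integral}, namely
$$
\int f\, d\zeta^F = \max\Big\{\langle \zeta^F,y\rangle \mid y\in \R^\cF_+,\ \sum_{G\in\cF}y_G\1_G\le f\Big\}.
$$
Since $\zeta^F(G)=1$ exactly when $G\supseteq F$ and is $0$ otherwise, the objective $\langle\zeta^F,y\rangle=\sum_{G\supseteq F}y_G$ only rewards weight placed on supersets of $F$. So intuitively the optimum pours as much weight as possible onto the single set $F$ itself (or onto supersets), and the binding constraint is that $\sum_G y_G\1_G(j)\le f_j$ at every coordinate $j$; restricting attention to coordinates $j\in F$, every such set $G\supseteq F$ contributes to the inequality at $j$, so $\sum_{G\supseteq F}y_G\le f_j$ for each $j\in F$, forcing $\langle\zeta^F,y\rangle\le\min_{j\in F}f_j$. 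This gives the upper bound.

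For the lower bound I would exhibit a feasible $y$ achieving $\min_{j\in F}f_j$: set $y_F:=\min_{j\in F}f_j$ and $y_G:=0$ for all other $G\in\cF$. Feasibility of $\sum_G y_G\1_G\le f$ is immediate since $y_F\1_F(j)=\min_{k\in F}f_k\le f_j$ for $j\in F$ and is $0$ for $j\notin F$; and the objective value is $y_F=\min_{j\in F}f_j$. Combining the two bounds yields the claim. Alternatively, one can argue via the dual (min) formulation by checking that $x:=(\min_{j\in F}f_j)\,\1_{\{p\}}$, where $p\in F$ attains the minimum, lies in $\core(\zeta^F)$: for $G\in\cF$ with $G\supseteq F$ we have $p\in G$ so $x(G)=\min_{j\in F}f_j\ge 1=\zeta^F(G)$ is not quite right dimensionally—so I would instead use $x:=\1_{\{p\}}\cdot$(nothing) and note $\zeta^F(G)\in\{0,1\}$; the cleaner route really is the primal one, so I would present the primal feasible solution and the coordinatewise upper-bound argument and not bother with the dual.

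The main (and only) subtlety is making the upper-bound argument airtight: one must observe that \emph{every} set $G$ with $y_G>0$ contributing to the objective satisfies $G\supseteq F$, hence contributes to the constraint row indexed by each $j\in F$, so summing the relevant part of the constraint at a single well-chosen coordinate $j=p$ (a minimizer of $f$ over $F$) gives $\langle\zeta^F,y\rangle=\sum_{G\supseteq F}y_G\le\sum_{G\ni p}y_G\le f_p=\min_{j\in F}f_j$. There is no real obstacle here — the statement is essentially a one-line LP computation — but I would take care to state explicitly that this holds for \emph{arbitrary} $(\cF,\subseteq)$, i.e. no closure or intersection hypothesis on $\cF$ is used, since that generality is the point of the lemma and is what makes it a clean building block for the subsequent sections.
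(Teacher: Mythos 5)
Your proof is correct and is essentially the paper's argument: both exhibit the same optimal pair for the LP of Lemma~\ref{l.Choquet-upper-integral} (the vector $y$ supported on $F$ with value $\min_{j\in F}f_j$, and the constraint at a minimizing coordinate $p$), your coordinatewise upper bound being exactly weak duality with the unit vector $\1_{\{p\}}$ spelled out, which is the dual certificate the paper uses. Your aside about the dual is only garbled because you scaled the certificate by $\min_{j\in F}f_j$; the unscaled unit vector at $p$ is feasible for the min-program since $x(G)=1\geq\zeta^F(G)$ whenever $G\supseteq F$, but since you discarded that route it does not affect the validity of what you present.
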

\begin{proof}
Let $s\in F$ be such that $f_s =\min_{j\in F} f_j$  and denote by $x^s\in \R^N_+$ the corresponding unit vector. Then $x^s$ is feasible for the linear program
$$
\min_{x\geq 0}~  \langle f,x\rangle  \quad\mbox{s.t.}\quad x(F') \geq 1\quad \mbox{for all $F'\in \cF$ with $F'\supseteq F$}
$$
while the vector $y^s\in \R^{\cF}_+$ with the only nonzero component $y^s_{F} =
f_s$ is feasible for the dual linear program
$$
\max_{y\geq 0}~ \langle \zeta^F,y\rangle   \quad\mbox{s.t.}\quad \sum_{F'\supseteq F}y_{F'}\1_{F'} \leq f.
$$ In view of $\langle f,x^s) = f_s = \langle \zeta^F, y^s\rangle $, linear programming duality
guarantees optimality and we conclude
\begin{equation}\label{eq.containment-simple-Choquet}
      \int f\,d \zeta^F \;=\; \min_{j\in F} f_j \quad\mbox{for all $F\in \cF$.}
\end{equation}
\end{proof}

\subsection{Extensions of valuations} A simple function $\zeta^F:\cF\to \{0,1\}$ (relative to $(\cF,\subseteq)$) corresponds naturally to a simple function $\hat{\zeta}^F:2^N\to \{0,1\}$ (relative to the Boolean algebra $(2^N,\subseteq)$, where
$$ \hat{\zeta}^F(S) = 1 \quad:\Longleftrightarrow\quad S\supseteq F
\quad\mbox{for all $S\subseteq N$.}
$$
We thus associate with any valuation $v=\sum_{F\in \cF}\beta_F\zeta^F\in \cV(\cF)$ its \emph{extension} $\hat{v}:2^N\to \R$, where
$$
   \hat{v}(S) := \sum_{F\in \cF} \beta_F \hat{\zeta}^F(S) \quad (S\subseteq N)
$$
and immediately observe $\hat{v}(F) = v(F)$ for all $F\in \cF$. If $\beta_F\geq 0$,
the function $\hat{v}$ is easily seen to be supermodular on $(2^N,\subseteq)$. By Proposition~\ref{prop:mono}, the classical Choquet integral operator $f\mapsto \int f\, d\hat{v}$  therefore yields an upper integral relative to $(\cF,\subseteq)$. Hence we have
$$
\int_\cF f\,dv \leq \int  f\, d\hat{v} \quad\mbox{for all belief functions $v\in \cV^+(\cF)$.} $$

We now present a class of systems where actually equality holds for the two integrals (see Corollary~\ref{c.Choquet-Choquet} below).

\subsubsection{Weakly union-closed systems}\label{sec.w-u-closed}
Assume that $\cF$ is \emph{weakly union-closed} in the sense
$$
 F\cap G\neq \emptyset \quad \Longrightarrow \quad F\cup G\in \cF \quad\mbox{for all $F,G\in \cF$}
$$
and consider the containment order $(\cF,\subseteq)$ as before. We assume $\cF=\{F_1,\ldots,F_m\}$ such that for all indices $1\leq i,j\leq m$,
$$
    F_i \subseteq F_j \quad \Longrightarrow \quad i\leq j.
$$
\begin{remark}
Weakly union-closed systems have been investigated by Algaba {\it et al.}
\cite{albibolo01} as \emph{union-stable systems} with respect to games on
communication graphs, where it is noted that a set system $\cF$ is weakly union-closed if and only if for every $G\subseteq N$, the
  maximal sets in $\cF(G)=\{F\in \cF\mid F\subseteq G\}$ are pairwise disjoint.
\end{remark}

\begin{lemma}\label{lem:wuc} Let $\cF$ be weakly union-closed and denote by $\cM=\{M_1,\ldots,M_q\}$ the sequence of subsets chosen in
  MA. Then $(\cM,\subseteq)$ forms a forest (\emph{i.e.}, a cycle-free
  subgraph of the Hasse diagram of $(\cF,\subseteq)$) in which all
  descendants of a node are pairwise disjoint. Moreover, the outputs $y$ and
  $\pi=p_1\cdots p_q$ of the algorithm yield
\[
[f](v) = \langle v,y\rangle  = \sum_{i=1}^q (f_{p_i}-f_{p_{\uparrow i}})v(M_i)
\]
where $\uparrow\! i$ is the index of father of node $M_i$ in the tree, and
$f_{p_{\uparrow i}}=0$ if it has no father.
\end{lemma}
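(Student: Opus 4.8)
The plan is to analyze the combinatorial structure produced by MA step by step and then unwind the update of the cost vector $c$ to obtain the stated formula. First I would establish the forest structure. When in step (M1) the algorithm picks $M=F_i\in\cF(X)$ of minimal index, and later picks some $M'=F_j\in\cF(X')$ with $X'\subsetneq X$, I want to argue that either $M'$ and $M$ are disjoint, or $M'\subseteq M$ (so $M$ is a candidate ``father''). The key point is that at the moment $M'$ is chosen, $M$ is no longer available, which (since $M$ had minimal index among sets inside the then-current $X$) forces $M\not\subseteq X'$; combined with weak union-closedness --- if $M\cap M'\neq\emptyset$ then $M\cup M'\in\cF$, and $M\cup M'$ would also have been available and of smaller index than $M$ at the earlier stage, contradiction unless $M\cup M'=M$, i.e. $M'\subseteq M$ --- this yields the dichotomy. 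Defining $\uparrow\! i$ as the index of the most recently chosen set strictly containing $M_i$ (if any) then makes $(\cM,\subseteq)$ a forest, and the ``pairwise disjoint descendants'' property follows from the same weak-union-closed argument applied to two sets with a common father: if two descendants of $M$ intersected, their union would lie in $\cF$, be contained in $M$ hence in the relevant $X$, and would have been chosen before either of them.

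Next I would track the cost vector. Write $p_1,\dots,p_q$ for the successively removed elements and $M_1,\dots,M_q$ for the chosen sets, with $y_{M_i}=c^{(i)}_{p_i}$ the value of $c_{p_i}$ at the time $M_i$ is processed. The claim to prove by induction is that when $M_i$ is chosen, $c^{(i)}_{p_i}=f_{p_i}-f_{p_{\uparrow i}}$ (with $f_{p_{\uparrow i}}:=0$ when $M_i$ is a root). The reason is that $c$ starts as $f$ and each update subtracts $c^{(k)}_{p_k}\1_{M_k}$; hence $c^{(i)}_j=f_j-\sum_{k<i,\,j\in M_k}c^{(k)}_{p_k}$ for $j\in X$ at stage $i$. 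For $j=p_i$, the indices $k<i$ with $p_i\in M_k$ are exactly the ancestors of $M_i$ already processed; because each such $M_k$ contains $M_i$ and the ancestors are nested, this telescopes to $f_{p_i}-f_{p_{\uparrow i}}$ --- I would spell this telescoping out using that $p_{\uparrow i}$ itself lay in all those $M_k$ that are strict ancestors of $M_{\uparrow i}$, so by the induction hypothesis applied to $M_{\uparrow i}$ the partial sum over ancestors of $M_i$ equals $f_{p_{\uparrow i}}$. Finally, $[f](v)=\langle v,y\rangle=\sum_i y_{M_i}v(M_i)=\sum_i(f_{p_i}-f_{p_{\uparrow i}})v(M_i)$ is immediate from the definition of $[f](v)$ and the fact that MA records a nonzero $y$-coordinate only on the sets in $\cM$ (listing a set with multiplicity if it is revisited, which the forest structure shows cannot happen, so $q\le m$ and the $M_i$ are distinct).

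The main obstacle I anticipate is the bookkeeping in the forest claim: one has to be careful that ``the set of minimal index available inside $X$'' interacts correctly with the shrinking of $X$ and with weak union-closedness, in particular ruling out that a later-chosen set properly overlaps an earlier one without being contained in it, and ruling out that a set is chosen twice. Once that combinatorial picture is pinned down, the cost-telescoping is routine induction. I would also remark (or verify in passing) that $f_{p_i}-f_{p_{\uparrow i}}\ge 0$, which is consistent with $y\ge 0$ from Lemma~\ref{lem:ma}: indeed $p_{\uparrow i}\in M_i\subseteq M_{\uparrow i}$ and $p_{\uparrow i}$ was chosen as a minimal-weight element of $M_{\uparrow i}$ with respect to the cost at that time, so $f_{p_{\uparrow i}}\le f_{p_i}$ after accounting for the ancestor subtractions --- but this inequality is not needed for the displayed identity itself.
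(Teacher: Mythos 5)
Your proposal is correct and follows essentially the same route as the paper, whose proof is just a terse version of your argument: weak union-closedness plus the minimal-index (i.e.\ $\subseteq$-maximal) selection rule gives the dichotomy ``$M_j\subseteq M_k$ or $M_j\cap M_k=\emptyset$'' for a later-chosen set against an earlier one, whence the forest with disjoint children, and ``the formula results from the updating rule of $c$'', which you spell out via the telescoping induction over ancestors. The only slip is in your explicitly optional aside: $p_{\uparrow i}\in M_i$ need not hold (the minimal-weight element of $M_{\uparrow i}$ may lie outside $M_i$), but the inequality $f_{p_i}-f_{p_{\uparrow i}}=y_{M_i}\geq 0$ follows anyway from Lemma~\ref{lem:ma}, so nothing in the main argument is affected.
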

\begin{proof} At iteration $i$, either $M_i\subseteq
M_{i-1}$ or $M_i\cap M_{i-1}=\emptyset$ holds since $\cF$ is weakly union-closed.  So $M_i$ cannot have two fathers (supersets).  Hence $\{M_1,\ldots,M_m\}$ is a tree. Descendants of a node are pairwise disjoint for the same reason. Now, the formula results from the updating rule of $c$.
\end{proof}
Note that $(\cM,\subseteq)$ becomes a tree (connected and cycle-free) if $N\in\cF$.

\begin{theorem}\label{t.Monge-w-u-c}
Let $(\cF,\subseteq)$ be a weakly union-closed system.
For all $f\in\R_+^n,v\in \cV$ with $v=\sum_{F\in \cF}\beta_F\zeta^F$, we have
\begin{equation}\label{eq.classical-Choquet}
 [f](v)=\int_\cF f \,dv = \sum_{F\in \cF}\beta_F\int f\,d\zeta^F=\sum_{F\in
   \cF}\beta_F\min_{i\in F}f_i.
\end{equation}
\end{theorem}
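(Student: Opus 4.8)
The last equality in~(\ref{eq.classical-Choquet}) is exactly Lemma~\ref{lem:min}. By Theorem~\ref{t.Monge} and Corollary~\ref{c.Choquet-representation}, the first two equalities in~(\ref{eq.classical-Choquet}) will follow for all $v\in\cV$ as soon as we know that the Monge algorithm computes the Choquet integral of every simple function, that is,
$$
[f](\zeta^F)=\int f\,d\zeta^F=\min_{j\in F}f_j\qquad\text{for every }F\in\cF.
$$
So the whole proof reduces to this identity, which I intend to obtain by squeezing $[f](\zeta^F)=\langle\zeta^F,\ov{y}\rangle=\sum_{G\in\cF,\,G\supseteq F}\ov{y}_G$ between $\min_{j\in F}f_j$ from above and from below (here $\ov{y}$ is the output of the algorithm, which is supported on $\cM$). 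The upper bound is free: by Lemma~\ref{lem:ma} the vector $\ov{y}$ is feasible for the program $\max\{\langle\zeta^F,y\rangle\mid y\geq 0,\ \sum_G y_G\1_G\leq f\}$, whose value is $\int f\,d\zeta^F=\min_{j\in F}f_j$ by Lemma~\ref{l.Choquet-upper-integral} and Lemma~\ref{lem:min}; hence $[f](\zeta^F)\leq\min_{j\in F}f_j$.

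For the reverse inequality the key observation --- and the one place where weak union-closedness and the indexing rule~(\ref{eq.index-order}) are used --- is the following: \emph{whenever step \textup{(M1)} selects $M\in\cF(X)$, every $H\in\cF(X)$ with $H\cap M\neq\emptyset$ satisfies $H\subseteq M$}. Indeed $M\cup H\in\cF$ by weak union-closedness and $M\cup H\subseteq X$, so $M\cup H\in\cF(X)$; since $M\cup H\supseteq M$, rule~(\ref{eq.index-order}) puts the index of $M\cup H$ at most that of $M$, and minimality of the index of $M$ inside $\cF(X)$ forces $M\cup H=M$. I would then draw two consequences: (i)~any two members of $\cM$ that intersect are comparable, with the one selected earlier containing the other (apply the observation at the earlier step); (ii)~letting $t$ be the first iteration after which $F\not\subseteq X$ --- it exists since $X$ starts as $N\supseteq F$ and the algorithm ends with $\cF(X)=\emptyset\not\ni F$ --- the unique element $s:=p_t$ deleted at step $t$ lies in $F$, hence $s\in M_t\cap F$, and the observation applied to $H=F$ at step $t$ gives $F\subseteq M_t$.

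The remaining step is an elementary residual-weight identity combined with (i) and (ii). The running cost $c$ stays non-negative and loses $\ov{y}_{M_i}=c_{p_i}$ on the coordinates of $M_i$ at iteration $i$; since $s=p_t$ is deleted at step $t$, its residual cost is $0$ from then on and $s$ lies in no later $M_i$, so $\sum_{i:\,s\in M_i}\ov{y}_{M_i}=f_s$. I then claim that the members of $\cM$ containing $s$ are exactly those containing $F$: ``$\supseteq$'' holds because $s\in F$; for ``$\subseteq$'', a set $M_i\in\cM$ with $s\in M_i$ cannot be selected after step $t$ (then $s\notin X$), so it is selected at step $\leq t$, and since $s\in M_i\cap M_t$, consequence (i) forces $M_i\supseteq M_t\supseteq F$. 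Hence
$$
[f](\zeta^F)=\sum_{G\in\cF,\,G\supseteq F}\ov{y}_G=\sum_{i:\,s\in M_i}\ov{y}_{M_i}=f_s\ \geq\ \min_{j\in F}f_j,
$$
which, with the upper bound, gives $[f](\zeta^F)=\min_{j\in F}f_j=\int f\,d\zeta^F$, i.e. statement (a) of Theorem~\ref{t.Monge}; Theorem~\ref{t.Monge} and Corollary~\ref{c.Choquet-representation} then yield~(\ref{eq.classical-Choquet}) in full.

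The substantive part --- and the main obstacle --- is the combinatorial bookkeeping of the last two paragraphs: correctly isolating the decisive iteration $t$, deducing $F\subseteq M_t$ from weak union-closedness and the index rule, and identifying the chain of members of $\cM$ through the point $s=p_t$ with the chain of supersets of $F$ inside $\cM$. An alternative is to invoke the explicit telescoping formula of Lemma~\ref{lem:wuc}: the sets $M_i\supseteq F$ form a root-to-node path of the forest $(\cM,\subseteq)$, so $[f](\zeta^F)$ telescopes to $f_{p_{i_k}}$, where $M_{i_k}$ is the minimal member of $\cM$ containing $F$; one then has to check $M_{i_k}=M_t$ and $p_{i_k}=s\in F$, at the cost of first establishing the forest structure of Lemma~\ref{lem:wuc}.
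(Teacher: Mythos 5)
Your proposal is correct and follows essentially the same route as the paper: verify statement (a) of Theorem~\ref{t.Monge} by tracking the first iteration at which a point of $F$ is removed, using weak union-closedness plus the index rule to show every earlier selected set meeting $F$ (and $M_t$ itself) contains $F$, and then conclude via Theorem~\ref{t.Monge}, Corollary~\ref{c.Choquet-representation} and Lemma~\ref{lem:min}. The only (harmless) deviation is that you obtain $[f](\zeta^F)=\min_{j\in F}f_j$ by squeezing against the LP upper bound rather than arguing directly that $f_{p_t}=\min_{j\in F}f_j$ as the paper does; in fact your bookkeeping makes explicit some steps the paper leaves implicit.
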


\begin{proof} Assume that the Monge algorithm outputs the vector $y\in \R^{\cF}_+$, the set family $\cM=\{M_1,\ldots,M_k\}$ and the sequence $\pi= p_1\ldots p_k$ upon the input $f\in \R^N_+$. Consider the simple function $\zeta^i$ and recall that $\zeta^i(M_j) = 1$ is equivalent with $F_i\subseteq M_j$.

\medskip
Let $s$ be the smallest index such that $p_s\in F_i$. Then we have $F_i\subseteq M_s$ (since $\cF$ is weakly union-closed and MA always selects a $\subseteq$-maximal member $M$ of the current system $\cF(X)$). So we find
$$
 \(\zeta^i, y\) =  \sum_{M_j \ni p_s} y_{M_j} = f_s = \min_{t\in F_i} f_t =\int f d\zeta^i.
$$ In view of Lemma~\ref{lem:min}, MA thus computes the Choquet integral
 correctly for simple functions. So Theorem~\ref{t.Monge} guarantees the claim
 of the Theorem to be true, and Lemma~\ref{lem:min} explains the last equality
 in (\ref{eq.classical-Choquet}).

\end{proof}

\begin{corollary}\label{c.Choquet-Choquet}  Let $(\cF,\subseteq)$ be weakly union-closed and $f\in \R_+^N$. Then
$$
   \int_\cF f\,dv = \int f\, d\hat{v}  \quad\mbox{holds for all valuations $v\in \cV(\cF)$},
$$
and $\hat{v}$ is determined by
\[
\hat{v}(S)  =\int_\cF \1_S\, dv = \sum_{F\text{ maximal in }\cF(S)}v(F),\quad \forall S\in 2^N.
\]

\end{corollary}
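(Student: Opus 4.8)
The plan is to derive both identities directly from Theorem~\ref{t.Monge-w-u-c} and the classical Möbius representation~(\ref{eq:mobcho}), keeping careful track of which integral (the general $\int_\cF\cdot\,dv$ versus the classical $\int\cdot\,d\hat v$) and which Möbius transform (relative to $(\cF,\subseteq)$ versus relative to $(2^N,\subseteq)$) is in play at each step. Write $v=\sum_{F\in\cF}\beta_F\zeta^F$. Theorem~\ref{t.Monge-w-u-c} already gives, for every $f\in\R_+^N$,
$$\int_\cF f\,dv=\sum_{F\in\cF}\beta_F\min_{i\in F}f_i .$$
On the classical side, by construction $\hat v=\sum_{F\in\cF}\beta_F\hat\zeta^F$, and each $\hat\zeta^F$ is the simple belief function on $(2^N,\subseteq)$ whose Möbius transform is concentrated on $\{F\}$; hence $\int f\,d\hat\zeta^F=\min_{i\in F}f_i$ by~(\ref{eq:mobcho}). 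Since~(\ref{eq:mobcho}) exhibits $v\mapsto\int f\,dv$ as a linear function of the set function (Möbius inversion being linear), we get $\int f\,d\hat v=\sum_{F\in\cF}\beta_F\int f\,d\hat\zeta^F=\sum_{F\in\cF}\beta_F\min_{i\in F}f_i$. Comparing the two displays yields $\int_\cF f\,dv=\int f\,d\hat v$, the first assertion.

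For the formula for $\hat v$, I would first specialize to $f=\1_S$: applying~(\ref{eq:1}) to $\hat v$ on $2^N$ and using the identity just proved gives $\hat v(S)=\int\1_S\,d\hat v=\int_\cF\1_S\,dv$. It then remains to evaluate $\hat v(S)=\sum_{F\in\cF}\beta_F\hat\zeta^F(S)=\sum_{F\in\cF(S)}\beta_F$ and to match it with $\sum_{F\text{ maximal in }\cF(S)}v(F)$. Here the key input is the equivalent description of weakly union-closed families recalled above (due to Algaba et al.): the maximal members $M_1,\dots,M_r$ of $\cF(S)$ are pairwise disjoint. Since every $F\in\cF(S)$ is non-empty and contained in some $M_j$, disjointness forces it to lie in exactly one of them, so $\cF(S)$ is partitioned as $\cF(S)=\bigsqcup_{j=1}^r\cF(M_j)$. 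Because $v(M_j)=\sum_{F\in\cF}\beta_F\zeta^F(M_j)=\sum_{F\in\cF(M_j)}\beta_F$ (using $\zeta^F(M_j)=1\iff F\subseteq M_j$), summing over $j$ gives $\hat v(S)=\sum_{F\in\cF(S)}\beta_F=\sum_{j=1}^r v(M_j)$, as required.

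The argument is essentially bookkeeping; the only step with genuine content is the partition $\cF(S)=\bigsqcup_{j}\cF(M_j)$, and even that is immediate once one has the pairwise-disjointness of the maximal sets of $\cF(S)$. So I do not expect a real obstacle — the main care needed is notational, namely not to conflate the two Choquet integrals or the two Möbius transforms, and to remember that Theorem~\ref{t.Monge-w-u-c} already delivers the representation $\int_\cF f\,dv=\sum_F\beta_F\min_{i\in F}f_i$ for all valuations $v$, not merely for belief functions.
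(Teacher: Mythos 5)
Your proof is correct. For the main identity $\int_\cF f\,dv=\int f\,d\hat v$ you follow essentially the paper's route: the paper invokes Corollary~\ref{c.Choquet-representation} (linearity of the general integral, guaranteed by Monge correctness on simple functions) together with Lemma~\ref{lem:min}, and identifies the resulting sum $\sum_F\beta_F\min_{i\in F}f_i$ with the classical integral of $\hat v$; you do the same, only you quote Theorem~\ref{t.Monge-w-u-c} directly and make the classical side explicit via the M\"obius representation (\ref{eq:mobcho}) applied to $\hat v=\sum_F\beta_F\hat\zeta^F$, which is a welcome clarification of a step the paper leaves implicit. Where you genuinely diverge is the formula $\hat v(S)=\sum_{F\text{ maximal in }\cF(S)}v(F)$: the paper obtains it by running the Monge algorithm on the input $\1_S$ (so one must track which sets receive positive $y$-weight, via the forest structure of Lemma~\ref{lem:wuc}), whereas you argue combinatorially, using the Algaba et al.\ characterization that the maximal members of $\cF(S)$ are pairwise disjoint to partition $\cF(S)=\bigsqcup_j\cF(M_j)$ and sum the M\"obius coefficients $\beta_F$. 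Your variant is algorithm-free and arguably cleaner at this point, since the paper's one-line appeal to the Monge algorithm hides exactly the bookkeeping your partition argument makes explicit; the paper's version, in exchange, stays uniformly within the Monge-algorithm framework it has already set up.
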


\begin{proof} Assume $v=\sum_{F\in \cF}\beta_F\zeta^F$. Then Corollary~\ref{c.Choquet-representation} yields
$$
 \int_\cF f\,dv =\sum_{F\in \cF} \beta_F \min_{f\in F} f_j = \int f\, d\hat{v}.
$$
The expression for $\hat{v}$ results from the Monge algorithm.
\end{proof}

\begin{remark}
\begin{enumerate}
\item Corollary \ref{c.Choquet-Choquet} shows that the Choquet integral
  on a weakly union-closed family essentially equals the classical Choquet
  integral, and therefore inherits all its properties (in particular, comonotonic
  additivity (see Proposition~\ref{prop:schm1})).
\item The fact that $\hat{v}$ is an extension of $v$ suggests the following
  interpretation: consider again the classical definition of the Choquet
  integral given by (\ref{eq:ccho}), but with $v$ defined on $\cF$ instead of
  $2^N$. Call $f\in\R_+^n$ \emph{$\cF$-measurable} if all level sets $\{i\in
  N\mid f_i\geq\alpha\}$ belong to $\cF$, and denote by $M(\cF)$ the set of all
  $\cF$-measurable nonnegative functions. Then the classical Choquet integral on
  $\cF$ coincides with the (general) Choquet integral for all measurable
  functions, and therefore the latter is an extension of the former from
  $M(\cF)$ to $\R_+^n$.
\item The extension $\hat{v}$ is well-known in cooperative game theory as
   Myerson's \cite{mye77a} \emph{restricted game} and is used in the analysis of communication graph games. There, $\cF$ is the collection of connected components of the graph with the property of being weakly union-closed arising naturally.
\item A capacity $v$ on $(\cF,\subseteq)$ may not yield
  $\hat{v}$ as a capacity on $(2^N,\subseteq)$. Consider for example $N=\{1,2,3,4,5\}$ and the weakly union-closed system
  $\cF=\{12345,1234,2345,1345,124,234,$ $345,12,35,2,5\}$. Then $\hat{v}(N)=v(N)$ and $\hat{v}(1235)=v(12)+v(35)$. $v(N)=1=v(12)=v(35)$ shows that $\hat{v}$ is not monotone. Therefore, the Choquet integral w.r.t. a capacity is not necessarily monotone in general.
\end{enumerate}
\end{remark}

From Proposition~\ref{prop:mono}, we immediately see:

\begin{corollary}\label{c.submodularity} Let $(\cF,\subseteq)$ be weakly union-closed and $v\in \cV$ an arbitrary valuation. Then the following are equivalent:
\begin{itemize}
\item[(i)] The operator $f\mapsto \int_\cF f\, dv$ is superadditive on $\R_+^N$.
\item[(ii)] The extension $\hat{v}:2^N\to \R$ of $v$ is supermodular.
\end{itemize}
\end{corollary}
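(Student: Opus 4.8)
The plan is to deduce Corollary~\ref{c.submodularity} directly from the two results it cites, namely Corollary~\ref{c.Choquet-Choquet} and Proposition~\ref{prop:mono}. First I would invoke Corollary~\ref{c.Choquet-Choquet}, which tells us that for a weakly union-closed system $(\cF,\subseteq)$ and any valuation $v\in\cV(\cF)$ one has
$$
\int_\cF f\,dv = \int f\,d\hat v \qquad\text{for all }f\in\R_+^N,
$$
where $\hat v:2^N\to\R$ is the extension of $v$. Thus the operator $f\mapsto\int_\cF f\,dv$ on $\R_+^N$ is literally the same functional as the classical Choquet integral operator $f\mapsto\int f\,d\hat v$ on $\R_+^N$. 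In particular, superadditivity of the former on $\R_+^N$ is equivalent to superadditivity of the latter on $\R_+^N$.

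Next I would connect superadditivity of the classical Choquet integral to supermodularity of $\hat v$. Proposition~\ref{prop:mono} states that $f\mapsto\int f\,d\hat v$ is concave (equivalently, since it is positively homogeneous, superadditive) on $\R^n$ if and only if $\hat v$ is supermodular on $(2^N,\subseteq)$. The only mild subtlety is that Proposition~\ref{prop:mono} is phrased for concavity on all of $\R^n$, whereas statement (i) of the Corollary asks only for superadditivity on the nonnegative orthant $\R_+^N$. To bridge this I would note that the classical Choquet integral $\int f\,d\hat v$ is \emph{strong} in the sense of (\ref{eq.strong-basis}) — it satisfies $\int(f+\lambda\1_N)\,d\hat v=\int f\,d\hat v+\lambda\,\hat v(N)$ — so its extension to arbitrary weightings via (\ref{eq.arbitrary-integral}) is well-defined and agrees with the usual definition (\ref{eq:ccho}) on all of $\R^n$; superadditivity on $\R_+^N$ then propagates to superadditivity on $\R^n$ by translating both arguments into the positive orthant by a common multiple of $\1_N$ and using strongness. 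Hence superadditivity on $\R_+^N$, superadditivity on $\R^n$, concavity on $\R^n$, and supermodularity of $\hat v$ are all equivalent.

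Putting the two equivalences together: (i) holds $\iff$ $f\mapsto\int f\,d\hat v$ is superadditive on $\R_+^N$ $\iff$ (by the strongness argument) it is concave on $\R^n$ $\iff$ (by Proposition~\ref{prop:mono}) $\hat v$ is supermodular, which is (ii). I expect the only real obstacle to be the bookkeeping around the domain of definition — making rigorous that superadditivity on $\R_+^N$ already forces supermodularity of $\hat v$, since Proposition~\ref{prop:mono} as quoted assumes the integrand ranges over $\R^n$. If one prefers to avoid even that small detour, one can instead observe directly that supermodularity of $\hat v$ is characterized by a finite list of inequalities of the form $\hat v(F\cup G)+\hat v(F\cap G)\ge\hat v(F)+\hat v(G)$, and each such inequality is read off by evaluating the superadditivity inequality $\int_\cF(\1_F+\1_G)\,dv\ge\int_\cF\1_F\,dv+\int_\cF\1_G\,dv$ together with $\int_\cF\1_S\,dv=\hat v(S)$ from Corollary~\ref{c.Choquet-Choquet}, and $\int_\cF(\1_F+\1_G)\,dv=\int_\cF(\1_{F\cup G}+\1_{F\cap G})\,dv$ since $\1_F+\1_G=\1_{F\cup G}+\1_{F\cap G}$; this gives $(i)\Rightarrow(ii)$ by a one-line computation, and $(ii)\Rightarrow(i)$ is exactly Proposition~\ref{prop:mono} applied to $\hat v$ via Corollary~\ref{c.Choquet-Choquet}.
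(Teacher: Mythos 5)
Your proposal is correct and takes essentially the paper's route: the paper obtains the corollary immediately from Corollary~\ref{c.Choquet-Choquet} (identifying $\int_\cF f\,dv$ with the classical Choquet integral $\int f\,d\hat{v}$) combined with Proposition~\ref{prop:mono}. Your additional care about the domain $\R_+^N$ versus $\R^n$, and your alternative indicator-function argument for (i)$\Rightarrow$(ii) (which implicitly uses that $\int_\cF(\1_{F\cup G}+\1_{F\cap G})\,dv=\hat{v}(F\cup G)+\hat{v}(F\cap G)$, a one-line level-set computation with the classical integral), merely fill in details the paper leaves unstated.
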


\subsubsection{Algebras}
An \emph{algebra} is a collection $\cA$ of subsets of $N$ that is closed under set union and set complementation with $\emptyset,N\in \cA$. In particular, $\cF=\cA\setminus\{\emptyset\}$ is a weakly union-closed family. Let $\cB=\cB(\cA) = \{B_1,\ldots,B_k\}$ be the family of \emph{atoms} (\emph{i.e.}, minimal non-empty members) of the the algebra $\cA$. Then $(\cA,\subseteq)$ is isomorphic to $2^\cB$ (and, in particular, also intersection-closed).

\medskip
Lehrer~\cite{leh09a} (see also Teper~\cite{tep09}) has introduced a discrete integral relative to the algebra $\cA$ as follows. Given a probability measure $P$ on $\mathcal{A}$ and a non-negative function $f\in \R_+^N$, define
\[
\int_\cL f\,dP_{\mathcal{A}}:= \sup_{\lambda\geq 0}\Big\{\sum_{S\in\mathcal{A}}\lambda_SP(S)\mid
\sum_{S\in\mathcal{A}}\lambda_S\1_S\leq f\Big\}.
\]
Lehrer shows that the functional $f\mapsto \int_\cL f\, dP_\cA$  is a concave operator on $\R_+^N$. Let us exhibit Lehrer's integral as a special case of our general Choquet integral.

\begin{proposition}\label{l.Lehrer-integral} Let $\cA$ be an algebra and $P$ a probability measure on $\cA$. Setting $\cF=\cA\setminus\{\emptyset\}$, one then has
$$
\int_\cL f\,dP_{\mathcal{A}} = \int_\cF f\,dP \quad\mbox{for all $f\in \R^N_+$.}
$$
In particular, Lehrer's integral can be computed with the Monge algorithm.
\end{proposition}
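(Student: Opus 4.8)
The plan is to recognise both integrals as the value of the same linear program, namely the one appearing in Lemma~\ref{l.Choquet-upper-integral}. Concretely, I would show that $P$, viewed as a valuation on $\cF=\cA\setminus\{\emptyset\}$, is a belief function, i.e.\ lies in $\cV^+(\cF)$, so that $\int_\cF f\,dP=P^*(f)$; then the ``$\max$'' representation of $P^*(f)$ from Lemma~\ref{l.Choquet-upper-integral} becomes, after deleting the term indexed by $\emptyset$, literally Lehrer's ``$\sup$''. The only substantive step is the belief-function claim; the rest is bookkeeping.

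First I would check that $P\in\cV^+(\cF)$. Since $(\cA,\subseteq)$ is isomorphic to $2^\cB$, every $F\in\cF$ is the disjoint union of the atoms $B\in\cB$ contained in it, and $P$ is additive over such finite disjoint unions; so, setting $\beta_F:=P(F)$ when $F$ is an atom and $\beta_F:=0$ otherwise, one gets $\sum_{F'\in\cF,\,F'\subseteq F}\beta_{F'}=\sum_{B\in\cB,\,B\subseteq F}P(B)=P(F)$ for every $F\in\cF$, that is $\hat{\beta}=P$. By the M\"obius inversion (\ref{eq.Moebius-inversion}), $\beta$ is the unique density of $P$ relative to $(\cF,\subseteq)$, and $\beta\geq 0$ gives $P=\sum_{F\in\cF}\beta_F\zeta^F\in\cV^+(\cF)$; in particular $P^+=P$, $P^-=0$ and $\int_\cF f\,dP=P^*(f)$. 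Lemma~\ref{l.Choquet-upper-integral} then yields
$$\int_\cF f\,dP=P^*(f)=\max\Big\{\sum_{F\in\cF}y_F\,P(F)\ \Big|\ y\in\R^{\cF}_+,\ \sum_{F\in\cF}y_F\1_F\leq f\Big\}.$$
In Lehrer's expression the summand for $S=\emptyset$ adds $\lambda_\emptyset P(\emptyset)=0$ to the objective and $\lambda_\emptyset\1_\emptyset=\v0$ to the constraint, so it may be discarded; the remaining supremum ranges over precisely the feasible set above with the same objective, and it is attained, being the maximum just displayed. Hence $\int_\cL f\,dP_{\cA}=\int_\cF f\,dP$.

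Finally, for the last assertion I would invoke Theorem~\ref{t.Monge-w-u-c}: the family $\cF=\cA\setminus\{\emptyset\}$ is weakly union-closed, so the Monge algorithm returns $[f](P)=\int_\cF f\,dP=\int_\cL f\,dP_{\cA}$. I expect the only (mild) obstacle to be the belief-function claim: one has to exhibit the density of $P$ relative to $(\cF,\subseteq)$ explicitly and observe that, because the M\"obius mass of an additive measure sits on the atoms, this density is non-negative. Once $P\in\cV^+(\cF)$ is in hand, the identity is a direct comparison of two optimisation problems with identical feasible sets and objectives.
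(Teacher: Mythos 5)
Your proposal is correct and follows essentially the same route as the paper's proof: drop the $\emptyset$ term using $P(\emptyset)=0$, identify $P$ as a belief function via its M\"obius representation $P=\sum_{B\in\cB}P(B)\zeta^B$ concentrated on the atoms, and apply the max-representation of Lemma~\ref{l.Choquet-upper-integral}. Your explicit appeal to Theorem~\ref{t.Monge-w-u-c} for the Monge-algorithm claim is a harmless addition of detail the paper leaves implicit.
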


\begin{proof} Because of $P(\emptyset) = 0$, we have
$$
\int_\cL f\,dP_{\mathcal{A}} = \max_{y\geq 0}\Big\{\sum_{S\in\mathcal{F}} y_SP(S)\mid
\sum_{S\in\mathcal{F}}y_S\1_S\leq f\Big\}.
$$
 By Lemma~\ref{l.Choquet-upper-integral}, the Proposition now follows once we establish $P$ as a belief function relative to $(\cF,\subseteq)$. Indeed, as a probability measure $P$ is additive on $\mathcal{A}$, we infer the M\"obius representation
$$
     P = \sum_{B\in \mathcal{B}(\mathcal{A})} P(B)\zeta^{B}
$$
with non-negative coefficients $P(B)\geq 0$, which proves the Proposition.

\end{proof}

\medskip
Lehrer furthermore defines the induced capacity $v_{\mathcal{A}}$ on $2^N$ by
\[
v_{\mathcal{A}}(S) := \sup\{P(A)\mid A\in\mathcal{A}, A\subseteq S\}.
\]

\begin{lemma} Let $P$ be a probability measure on the algebra $\cA$. Then the induced capacity $v_\cA$ is precisely the extension of $P$, \emph{i.e.},
$$
    v_{\cA}(S) = \hat{P}(S) \quad \mbox{holds for all $S\subseteq N$.}
$$
\end{lemma}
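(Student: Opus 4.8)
The plan is to prove that both $v_{\cA}(S)$ and $\hat P(S)$ coincide with $P(A_S^*)$, where $A_S^*:=\bigcup\{A\in\cA\mid A\subseteq S\}$ denotes the largest member of $\cA$ contained in $S$. This set is well defined because $\cA$ is finite and closed under union: the union of all members of $\cA(S):=\{A\in\cA\mid A\subseteq S\}$ again lies in $\cA(S)$ and contains every member of that family. Note that $A_S^*$ may be empty (precisely when no non-empty member of $\cA$ fits inside $S$).

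First I would treat $v_{\cA}(S)$. By definition $v_{\cA}(S)=\sup\{P(A)\mid A\in\cA,\ A\subseteq S\}$. Since $P$ is a probability measure it is monotone, so $P(A)\leq P(A_S^*)$ for every $A\in\cA(S)$; as $A_S^*$ itself belongs to $\cA(S)$, the supremum is attained and $v_{\cA}(S)=P(A_S^*)$.

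Next I would compute $\hat P(S)$. Since $\cF=\cA\setminus\{\emptyset\}$ is weakly union-closed (being the non-empty part of an algebra), Corollary~\ref{c.Choquet-Choquet} gives $\hat P(S)=\sum_{F\text{ maximal in }\cF(S)}P(F)$. But $\cA$ is in fact union-closed, so $\cF(S)=\cA(S)\setminus\{\emptyset\}$ has a \emph{unique} $\subseteq$-maximal element whenever it is non-empty, namely $A_S^*$, and the sum collapses to the single term $P(A_S^*)$. When $\cF(S)=\emptyset$ one has $A_S^*=\emptyset$ and the empty sum equals $0=P(\emptyset)=P(A_S^*)$. In either case $\hat P(S)=P(A_S^*)$, and comparing with the previous paragraph yields $v_{\cA}(S)=\hat P(S)$ for all $S\subseteq N$, as claimed.

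There is no genuine obstacle here: the whole point is the observation that union-closedness of $\cA$ forces a unique $\subseteq$-maximal feasible set, which simultaneously collapses the supremum in Lehrer's definition and the sum in the extension formula of Corollary~\ref{c.Choquet-Choquet} to the single value $P(A_S^*)$; monotonicity of the probability measure then does the rest. The only point requiring a moment's care is the degenerate case $A_S^*=\emptyset$, handled above via $P(\emptyset)=0$.
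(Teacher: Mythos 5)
Your proof is correct, but it takes a genuinely different route from the paper's. You identify both sides with $P(A_S^*)$, where $A_S^*$ is the unique maximal member of $\cA$ contained in $S$: monotonicity of $P$ collapses the supremum defining $v_\cA(S)$, while the restricted-game formula of Corollary~\ref{c.Choquet-Choquet}, combined with union-closedness of $\cA$ forcing a single maximal set in $\cF(S)$, collapses the extension $\hat P(S)$ to the same value; you also treat the degenerate case $A_S^*=\emptyset$ correctly via $P(\emptyset)=0$. The paper argues instead through the atoms $\cB$ of $\cA$: using the atomic M\"obius representation $P=\sum_{B\in\cB}P(B)\zeta^B$ established in Proposition~\ref{l.Lehrer-integral}, the definition of the extension immediately gives $\hat P(S)=\sum_{B\in\cB,\ B\subseteq S}P(B)$, and additivity of $P$ identifies this sum with $\sup\{P(A)\mid A\in\cA,\ A\subseteq S\}$. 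So the paper's argument is essentially self-contained, needing only additivity and the already-computed M\"obius coefficients, whereas yours leans on Corollary~\ref{c.Choquet-Choquet} (and hence, indirectly, on the Monge-algorithm machinery behind it) but avoids atoms altogether and makes the ``largest feasible subset'' mechanism explicit. Both arguments are complete; they differ only in which previously established ingredient carries the load.
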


\begin{proof} Let $\cB(S) = \{B\in \cB\mid B\subseteq S\}$ be the collection of all atoms that are contained in $S$. Since $P$ is non-negative and additive on $\cA$, we apparently have
$$
\sup\{P(A)\mid A\in\mathcal{A}, A\subseteq S\} = \sum_{B\in \cB(S)}P(B) =\sum_{B\in \cB} P(B)\hat{\zeta}^B(S) =\hat{P}(S).
$$
\end{proof}

\section{Intersection systems}
We address in this section a more general order relation than the containment
order. It has applications in graph theory (namely, the cut set problem, see
\cite{joh65}) we do not detail here since this falls outside the scope of the
paper. This order relation will permit to derive general results on supermodularity.

\subsection{Consecutive ordered systems}
The (partial) precedence ordering $(\cF,\preceq)$ is said to be
\emph{consecutive} if
$$
F\cap H\;\subseteq\; G \quad\mbox{holds for all $F,G,H\in \cF$ with $F\preceq G\preceq H$.}
$$
The consecutive property implies a kind of submodularity condition: For any $F,G\in \cF$ and $L,U\in \cF(F\cup G)$ with $L\preceq F,G\preceq L$, we find
\begin{equation}\label{eq.submodular}
\1_L + \1_U\leq \1_{F} + \1_{G}.
\end{equation}
(The familiar form of the submodular inequality appears intuitively in (\ref{eq.submodular}) when we employ the notation $F\wedge G := L$ and $F\vee G:= U$.)

\medskip
We call a consecutive ordered system $(\cF,\preceq)$ an \emph{intersection
  system} if for all sets $F\in \cF$ the following is true:
\begin{itemize}
\item[(IS0)] For every $G\in \cF$ with  $F\cap G\neq \emptyset$, there is some $F\vee G\in \cF(F\cup G)$ such that $F,G\preceq F\vee G$.
\item[(IS1)] The \emph{upper interval} $[F) :=\{G\in \cF\mid G\succeq F\}$ is
  ``closed'' under $\vee,\wedge$, \emph{i.e.}, for every $G,H\in[F)$ there exist sets $G
    \vee H,G\wedge H\in\cF(G\cup H)$ such that $$ F\preceq G\wedge H\preceq G,H\preceq
    G\vee H.$$
\end{itemize}

\begin{remark}
Note that every containment order $(\cF,\subseteq)$ is trivially consecutive. So every weakly union-closed family $\cF$ yields $(\cF,\subseteq)$ as an intersection system: In the case $G,H\in [F)$, it would suffice to take
$$
G\wedge H:= F \quad\mbox{and}\quad G\vee H:= G\cup H.
$$
So intersection systems generalize the classical model $(2^N,\subseteq)$ in particular.
\end{remark}

As an illustration, we give an example of intersecting system where the order is
not the containment order.
\begin{example}\label{ex:1}
  Let $N=\{1,2,3,4,5,6\}$ and consider the system below. It can be checked that
  it is an intersection system.
\begin{figure}[htb]
\begin{center}
\psset{unit=0.6cm}
\pspicture(-0.5,-1.5)(4.5,5.5)
\pspolygon(0,1)(0,3)(2,5)(4,3)(4,1)(2,-1)
\psline(0,1)(2,3)(4,1)
\psline(0,3)(2,1)(4,3)
\psline(2,-1)(2,1)
\psline(2,3)(2,5)
\pscircle[fillstyle=solid](0,1){0.1}
\pscircle[fillstyle=solid](0,3){0.1}
\pscircle[fillstyle=solid](2,5){0.1}
\pscircle[fillstyle=solid](4,3){0.1}
\pscircle[fillstyle=solid](4,1){0.1}
\pscircle[fillstyle=solid](2,-1){0.1}
\pscircle[fillstyle=solid](2,1){0.1}
\pscircle[fillstyle=solid](2,3){0.1}
\uput[180](0,1){\tiny 45}
\uput[180](0,3){\tiny 234}
\uput[90](2,5){\tiny 12}
\uput[0](4,3){\tiny 126}
\uput[0](4,1){\tiny 16}
\uput[-90](2,-1){\tiny 6}
\uput[225](2,1){\tiny 236}
\uput[135](2,3){\tiny 15}
\endpspicture
\end{center}
\end{figure}
\end{example}

Our main result in this section assures that the Choquet integral on intersection
systems may be computed with the Monge algorithm.

\begin{theorem}\label{thm:inter}
Let $(\cF,\preceq)$ be an intersection system, $f\in \R_+^N$ and $(\ov{y},\cM,\pi)$
the corresponding output of the Monge algorithm. Then we have
\[
[f](v)=\int_{\cF} f \,dv = \sum_{F\in \cF}\beta_F\int_{\cF}f\, d\zeta^F \quad\mbox{for all valuations $v\in \cV$.}
\]
\end{theorem}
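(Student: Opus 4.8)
By Theorem~\ref{t.Monge}, it suffices to prove statement (a) of that theorem, namely that the Monge algorithm computes the Choquet integral correctly for every simple function $\zeta^i$; the rest of the displayed identity then follows from Theorem~\ref{t.Monge} together with Corollary~\ref{c.Choquet-representation}. So the plan is to fix $i$, let $(\ov y,\cM,\pi)$ be the output of MA on input $f$, and show that $\ov y$ is an optimal solution of the packing LP
$$
\max\Big\{\langle\zeta^i,y\rangle\mid y\in\R^{\cF}_+,\ \sum_{F\in\cF}y_F\1_F\le f\Big\}.
$$
By Lemma~\ref{lem:ma} we already know $\ov y$ is feasible, so the task is to produce a dual-feasible $x\in\R^N_+$ with $x(F)\ge\zeta^i(F)$ for all $F\in\cF$ and $\langle f,x\rangle=\langle\zeta^i,\ov y\rangle$, and invoke LP duality.

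\textbf{Constructing the witnesses.} I would track the run of MA and record, at each iteration $k$, the chosen set $M_k=F_{j_k}\in\cF(X_{k-1})$ (the minimal-index feasible set, hence a $\preceq$-maximal one by the index order~(\ref{eq.index-order})) and the pivot element $p_k$ of minimal current weight. The key structural claim, to be proved by induction on the iteration count using properties (IS0) and (IS1), is: once a set $G\in\cF$ with $F_i\preceq G$ becomes infeasible for the current $X$ (i.e. $G\not\subseteq X$), there has already been an iteration $s$ with $F_i\preceq M_s$ and $p_s\in G$. Concretely, if $p$ is the first pivot removed from such a $G$, then at that moment $M=F_i\vee\cdots$ formed via (IS0)/(IS1) inside the upper interval $[F_i)$ is still a feasible set lying above both the current $M$-candidate and $G$; minimality of the index of the actually-chosen $M$ forces $F_i\preceq M$ and the consecutive property forces the pivot to lie in $G$. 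This is the same phenomenon exploited in the proof of Theorem~\ref{t.Monge-w-u-c}, now carried out with the $\vee,\wedge$ operations supplied by the intersection-system axioms rather than with plain union/intersection. Granting this, let $s=s(i)$ be the first iteration at which the pivot $p_s$ satisfies $F_i\preceq M_s$; set $x:=f_{p_s}\cdot\1_{\{p_s\}}$. Since every $G\succeq F_i$ eventually becomes infeasible (MA halts with $\cF(X)=\emptyset$) and, by the claim, the first pivot taken out of any such $G$ occurs no earlier than $s$, and moreover all these pivots carry weight $\ge f_{p_s}$ because the current weight of $p_s$ at iteration $s$ was minimal over $M_s$ and the running weights only decrease in a controlled, monotone way — so $x(G)=f_{p_s}\ge1=\zeta^i(G)$ for all $G\succeq F_i$ and $x(G)=0=\zeta^i(G)$ otherwise. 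Thus $x$ is dual-feasible.

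\textbf{Matching the objective values.} On the primal side, $\langle\zeta^i,\ov y\rangle=\sum_{M_k\succeq F_i}\ov y_{M_k}=\sum_{M_k\succeq F_i}c^{(k)}_{p_k}$, where $c^{(k)}$ is the weight vector at iteration $k$. I claim this telescopes to exactly $f_{p_s}$: the contributions from iterations $k<s$ vanish (no $M_k$ lies above $F_i$ before iteration $s$), and the contributions from iterations $k\ge s$ with $M_k\succeq F_i$ are precisely the successive amounts subtracted from the coordinate that started (at iteration $s$) equal to $f_{p_s}$ along the chain of sets above $F_i$ that contain the relevant pivots — here the forest/chain structure of $\{M_k\succeq F_i\}$, which follows from the consecutive property exactly as in Lemma~\ref{lem:wuc}, guarantees the sum collapses to the initial value $f_{p_s}$. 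Hence $\langle\zeta^i,\ov y\rangle=f_{p_s}=\langle f,x\rangle$, LP duality yields optimality of $\ov y$, and therefore $[f](\zeta^i)=\int_\cF f\,d\zeta^i$ for every $i$. Applying Theorem~\ref{t.Monge} gives $[f](v)=\int_\cF f\,dv$ for all $v\in\cV$, and Corollary~\ref{c.Choquet-representation} gives $\int_\cF f\,dv=\sum_{F\in\cF}\beta_F\int_\cF f\,d\zeta^F$, completing the proof.

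\textbf{Main obstacle.} The delicate point is the inductive structural claim that the first pivot ever removed from a set $G\succeq F_i$ happens at an iteration whose chosen set $M$ also lies above $F_i$. This is where axioms (IS0), (IS1) and the consecutive property must all be used simultaneously — (IS1) to keep a join of $G$ with the current candidate available inside $[F_i)$, the minimal-index selection rule to upgrade ``available'' to ``actually chosen above $F_i$'', and consecutiveness to force the pivot of that chosen set to sit inside $G$. Getting the bookkeeping of running weights right (so that every later pivot out of a set above $F_i$ has weight at least $f_{p_s}$, making $x$ dual-feasible and the telescoping exact) is the routine-but-careful part that I would carry out alongside this claim.
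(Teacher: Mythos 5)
Your plan ports the proof of Theorem~\ref{t.Monge-w-u-c} (via Lemma~\ref{lem:min}) to intersection systems, but that argument is tied to the containment order and breaks here. The dual constraints for $\zeta^i$ are $x(G)\ge 1$ for all $G\succeq F_i$, and for a general intersection system $G\succeq F_i$ does \emph{not} imply $G\supseteq F_i$, so there is no element guaranteed to lie in every $G\in[F_i)$; in fact $[F_i)$ may contain pairwise disjoint sets (the axioms (IS0)/(IS1) only require joins and meets to exist inside $\cF(G\cup H)$, not common elements), in which case no point mass can be dual-feasible and the optimum of the packing LP for $\zeta^i$ is \emph{not} of the form ``$f$ evaluated at a single pivot'': the Monge output can legitimately put positive weight on several disjoint sets above $F_i$, all contributing to $\langle\zeta^i,\ov y\rangle$. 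So both of your central claims --- that $x=f_{p_s}\1_{\{p_s\}}$ is dual-feasible and that $\langle\zeta^i,\ov y\rangle$ telescopes to $f_{p_s}$ --- fail in general (there is also a scaling slip: dual feasibility needs $x(G)\ge 1$, not $\ge f_{p_s}$, and with your $x$ one gets $\langle f,x\rangle=f_{p_s}^2$). The ``structural claim'' you flag as the delicate point is not only unproved but would not repair this, and Lemma~\ref{lem:wuc}, which you invoke for the chain/forest structure, is a statement about weakly union-closed families under $\subseteq$, not about intersection systems.

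The paper's proof takes a different route that you would need to adopt or reinvent: fix $\zeta^i$, let $y^*$ be the \emph{lexicographically maximal} optimal solution of the packing LP (with respect to the index order), and show $y^*_{M_1}=\ov y_{M_1}=f_{p_1}$. This is done by two exchange (uncrossing) arguments: if $y^*_{M_1}<\ov y_{M_1}$, the sets $S$ with $y^*_S>0$ meeting $M_1$ all lie in $[F_i)$ (by (IS0) and the minimal-index selection rule, which forces $S\vee M_1=M_1$); incomparable such sets can be uncrossed using (IS1) together with inequality (\ref{eq.submodular}) (consecutiveness) without losing feasibility or objective value, so they form a chain; and then weight can be shifted from the largest such set below $M_1$ onto $M_1$ (again using consecutiveness), contradicting lexicographic maximality. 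The theorem then follows by induction on $|N|$, since $\cF(N\setminus\{p_1\})$ is again an intersection system, and finally by Theorem~\ref{t.Monge} and Corollary~\ref{c.Choquet-representation} --- that last reduction is the only part of your proposal that matches the paper and is sound.
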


\begin{proof}
By Theorem~\ref{t.Monge}, it suffices to establish the Theorem for simple
functions. Choose   $v=\zeta^i$, the simple function associated to $F_i$. Assume $\cM=\{M_1,\ldots,M_k\}$ and $\pi=\{p_1,\ldots, p_k\}$ and set
$$
     \cS := \{M_1\}\cup [F_i) = \{M_1\}\cup \{F\in \cF\mid \zeta^i(F)\neq 0\}.
$$
Let $y^*\in \R_+^\cS$ be the (with respect to the index order of $\cF$) lexicographically maximal vector with the property
\begin{equation}\label{eq:feas}
 \sum_{S\in\cS}y^*_S \1_S\leq f\quad\mbox{and}\quad \sum_{F\in \cS} y_S^* \zeta^i(S) = \int_\cF f\,d\zeta^i.
\end{equation}
It suffices to show $f_{p_1}=\ov{y}_{M_1}=y^*_{M_1}$. (The Theorem then follows by induction on $|N|$ because $\cF(N\setminus\{p_1\})$ is also an intersection system.)

\medskip

Since $\ov{y}_{M_1}=\min_{M_1}f$, the selection rule of MA guarantees $0\leq
y^*_{M_1}\leq \ov{y}_{M_1}$. Suppose $\ov{y}_{M_1}>y^*_{M_1}$ were the
case. Then there must exist some $S\in \cS\setminus\{M_1\}$ with $y_S^*>0$ and
$M_1\cap S\neq \emptyset$ (because otherwise $y^*_{M_1}$ could be increased
without violating the feasibility conditions, which contradicts the
lexicographic maximality of $y^*$).
Set
\[\cC:=\{S\in \cS\mid y^*_S >0, M_1\cap S\neq \emptyset\}
\]
the collection of such sets. For any $S\in\cC$, by (IS0) $S\vee M_1$ exists and
$S\vee M_1 \succeq M_1$. By the selection rule of MA, we conclude
$$
M_1= S\vee M_1 \succeq S\succeq F_i \quad\mbox{and hence}\quad M_1\in [F_i).
$$
Moreover, $M_1$ is the unique maximal member of $[F_i) =\cS$.

\medskip
\textsc{Claim:} $\cC$ is a chain in $(\cF,\preceq)$.

\medskip
Indeed, if there existed incomparable sets $F,G\in \cC$, then $\cC\subseteq [F_i)$ implies the existence of $F\wedge G$ and $F\vee G$ in $[F_i)$ by property (IS1). So we could decrease
$y^*$ on the sets $F$ and $G$ by $\varepsilon:\leq\min\{y^*_F,y^*_G\}>0$ and
increase it on $F\wedge G$ and $F\vee G$ by the same value $\varepsilon$.
The resulting vector
$$
y'=y^*
  +\varepsilon(\1_{F\vee G} + \1_{F\wedge G} - \1_F - \1_G)
$$
would be lexicographically larger than $y^*$ and satisfy the right equality in
  (\ref{eq:feas}). Moreover, from (\ref{eq.submodular}) we deduce that $y'$
  still satisfies the left inequality in (\ref{eq:feas}) and thus contradict the
  choice of $y^*$.

\medskip
Let $C$ be the maximal element of $\cC\setminus\{M_1\}$ and observe from the consecutiveness of $(\cF,\preceq)$:
$$
    C \supseteq M_1\cap S \quad\mbox{for all $S\in \cS\setminus\{M_1\}$.}
$$
So $y^*$ could be increased on $M_1$ by $y_{C}^*>0$ and decreased on  $C$ by the same amount without violating feasibility -- again in contradiction to the choice of $y^*$. We therefore find  $y^*_{M_1} =\ov{y}_{M_1}$, which establishes the Theorem.
\end{proof}

\subsection{Supermodularity and superadditivity}\label{sec:union-closed-systems}
In view of Theorem~\ref{thm:inter}, the Choquet functional $f\mapsto \int_\cF f \,dv$ is superadditive on $\R_+^N$ if $v$ is a belief function on the intersection system $(\cF,\preceq)$. Unfortunately, no analogue of Proposition~\ref{prop:mono} is known for general ordered systems that would provide a ''combinatorial'' characterization of valuations $v$ with superadditive Choquet integral. We will now exhibit a model that generalizes classical supermodular functions and is sufficient for superadditivity.

\medskip
Let $(\cF,\preceq)$ be a consecutive ordered system and $\cF=\{F_1,\ldots,F_m\}$ arranged so that for all $1\leq i,j\leq m$,
$$
      F_i \succeq F_j \quad \Longrightarrow \quad i\leq j.
$$
We call two sets $F_i,F_j\in \cF$ \emph{co-intersecting} if there exists some index $k\leq \min\{i,j\}$ such that $F_k\cap F_i\neq \emptyset$ and $F_k\cap F_j\neq \emptyset$. ($F_i\cap F_j=\emptyset$ may be permitted for co-intersecting sets $F_i,F_j$).

\medskip
It is convenient to augment the ordered system $(\cF,\preceq)$ to the order $(\cF_0,\preceq)$, where $\cF_{0}:=\cF\cup\{\emptyset\}$ and $\emptyset$ is the unique minimal element.
Morever, we extend any valuation $v$ to a function on $\cF_0$ {\it via} the normal property $v(\emptyset):=0$. We now say that a valuation $v$ is \emph{supermodular} if for all co-intersecting sets $F,G\in \cF$, there are sets $F\wedge G,F\vee G\in \cF_0(F\cup G)$ such that
\begin{itemize}
\item[(S1)] $F\wedge G\preceq F,G\preceq F\vee G$.
\item[(S2)] $v(F\wedge G)+v(F\vee G) \geq v(F) +v(G)$.
\end{itemize}

Recalling that a capacity is a non-negative and monotone valuation, we can show with the technique of Theorem~\ref{thm:inter}:

\medskip
\begin{theorem}\label{t.supermodularity} Let $v:\cF\to \R_+$ be a supermodular capacity on the intersection system $(\cF,\preceq)$. Then
$$
   \int_\cF f \,dv = \max_{y\geq 0} \Big\{\(v,y\)\mid \sum_{F\in \cF} y_F\1_F \leq f\Big\} \
\quad\mbox{holds for all $f\in \R_+^N$.}
$$
Hence $f\mapsto \int_\cF f\,dv$ is a positively homogeneous and superadditive functional.
\end{theorem}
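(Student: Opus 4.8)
The plan is to identify the right-hand side with the Monge functional. By Theorem~\ref{thm:inter} we have $\int_\cF f\,dv=[f](v)=\langle v,\ov{y}\rangle$, where $\ov{y}$ is the output of the Monge algorithm, and by Lemma~\ref{lem:ma} this $\ov{y}$ satisfies $\ov{y}\ge 0$ and $\sum_{F\in\cF}\ov{y}_F\1_F\le f$; in particular $\ov{y}$ is feasible for the program $\max\{\langle v,y\rangle\mid y\ge 0,\ \sum_{F\in\cF}y_F\1_F\le f\}$, so the inequality $\int_\cF f\,dv\le\max\{\langle v,y\rangle\mid y\ge 0,\ \sum_{F\in\cF}y_F\1_F\le f\}$ is immediate. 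Everything thus reduces to proving that $\ov{y}$ is an \emph{optimal} solution of this program.

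I would prove optimality by induction on $|N|$ (the base case being trivial), following the technique of Theorem~\ref{thm:inter}. The feasible set is a bounded polyhedron, hence has a lexicographically maximal optimal solution $y^*$ with respect to the index order $F_1,\dots,F_m$. Let $M_1=F_1$ and $p_1\in M_1$ be the first set and element chosen by the Monge algorithm, so that $\ov{y}_{M_1}=\min_{j\in M_1}f_j=f_{p_1}$ and $M_1$ is a $\preceq$-maximal member of $\cF$. The heart of the argument is the claim $y^*_{M_1}=f_{p_1}$. Granting it, one checks that the restriction of $y^*$ to $\cF(N\setminus\{p_1\})$ is optimal for the analogous program for the weighting $f-f_{p_1}\1_{M_1}$ on the system $\cF(N\setminus\{p_1\})$, which is again an intersection system on which $v$ remains a supermodular capacity, while the restriction of $\ov{y}$ is exactly the corresponding Monge output; the induction hypothesis then gives $\langle v,y^*\rangle=\langle v,\ov{y}\rangle$, so $\ov{y}$ is optimal.

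It remains to establish the claim. Feasibility of $y^*$ at coordinate $p_1$ gives $y^*_{M_1}\le f_{p_1}$; suppose $y^*_{M_1}<f_{p_1}$. Raising $y^*_{M_1}$ alone changes $\langle v,y^*\rangle$ by $\varepsilon v(M_1)\ge 0$ and makes $y^*$ lexicographically larger, so maximality of $y^*$ forces some coordinate $j_0\in M_1$ to be tight; since $\sum_{F\ni j_0}y^*_F=f_{j_0}\ge f_{p_1}>y^*_{M_1}$, there is $S\neq M_1$ with $y^*_S>0$ and $S\cap M_1\neq\emptyset$. Set $\cC:=\{S\in\cF\mid y^*_S>0,\ S\cap M_1\neq\emptyset\}$, so $\cC\setminus\{M_1\}\neq\emptyset$; by (IS0) and maximality of $M_1$, every $S\in\cC$ satisfies $S\preceq M_1$. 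If $\cC$ contained $\preceq$-incomparable sets $F,G$, they would be co-intersecting through $F_1=M_1$ (the set of smallest index), so (S1)--(S2) supply $F\wedge G,F\vee G\in\cF_0(F\cup G)$ with $F\wedge G\preceq F,G\preceq F\vee G$ and $v(F\wedge G)+v(F\vee G)\ge v(F)+v(G)$; the exchange $y'=y^*+\varepsilon(\1_{F\vee G}+\1_{F\wedge G}-\1_F-\1_G)$ with $0<\varepsilon\le\min\{y^*_F,y^*_G\}$ is still feasible by (\ref{eq.submodular}), still optimal by (S2) (with $v(\emptyset)=0$ when $F\wedge G=\emptyset$), and is lexicographically larger than $y^*$ since $F\vee G$ has strictly smaller index than $F$ and $G$ --- contradicting the choice of $y^*$. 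Hence $\cC$ is a chain; let $C$ be its maximal element other than $M_1$. Consecutiveness of $(\cF,\preceq)$ gives $M_1\cap S\subseteq C$ for all $S\in\cC\setminus\{M_1\}$, so among positive-weight sets only $M_1$ meets $M_1\setminus C$, while monotonicity of $v$ gives $v(C)\le v(M_1)$ because $C\preceq M_1$. Transferring a small positive amount of weight from $C$ onto $M_1$ then preserves feasibility (the only affected coordinates lie in $M_1\setminus C$, where the load equalled $y^*_{M_1}<f_{p_1}\le f_j$), does not decrease $\langle v,\cdot\rangle$, and raises the lexicographic rank since $M_1=F_1$ --- the final contradiction. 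Therefore $y^*_{M_1}=f_{p_1}$.

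Positive homogeneity and superadditivity of $f\mapsto\int_\cF f\,dv$ are then read off from the $\max$-representation just proved: replacing $f$ by $\lambda f$ ($\lambda\ge 0$) replaces each feasible $y$ by $\lambda y$, and if $y,y'$ are feasible for $f$ and $g$ respectively then $y+y'$ is feasible for $f+g$, whence $\int_\cF f\,dv+\int_\cF g\,dv\le\int_\cF(f+g)\,dv$. I expect the third paragraph to be the main obstacle: one has to carry the lexicographic-exchange machinery of Theorem~\ref{thm:inter} over from simple functions to an arbitrary supermodular capacity, replacing the requirement that the exchanges \emph{preserve} the value of the perturbed step vector by the weaker one, supplied by (S2), that they do not let $\langle v,\cdot\rangle$ \emph{decrease} (with $F\wedge G=\emptyset$ absorbed into $v(\emptyset)=0$), and using the monotonicity of $v$ to justify the $C$-to-$M_1$ weight shift and thereby keep the induction running.
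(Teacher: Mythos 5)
Your proof is correct and follows essentially the same route as the paper: reduce to optimality of the Monge output for the max-program via Theorem~\ref{thm:inter}, take a lexicographically maximal optimal $y^*$, use co-intersection through $M_1$ together with (S1)--(S2) and (\ref{eq.submodular}) to force the positive-weight sets meeting $M_1$ into a chain, invoke consecutiveness and monotonicity of $v$ for the final transfer of weight onto $M_1$, and conclude by induction on $|N|$. The only cosmetic differences are that you dispense with the paper's restriction of the support to $\cS=\{M_1\}\cup\{F\in\cF\mid v(F)\neq 0\}$ (so you never need the step $v(F\vee G)\geq v(F)>0$) and that you are a bit more careful about the admissible size of the final $C$-to-$M_1$ transfer; both variations are harmless.
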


\begin{proof} Since $(\cF,\preceq)$ is an intersection system, the Monge algorithm computes the Choquet integral. Consequently, it suffices to prove
$$
\tilde{v}(f):= \max_{y\geq 0} \Big\{\(v,y\)\mid \sum_{F\in \cF} y_F\1_F \leq f\Big\} = \sum_{F\in \cF}\ov{y}_Fv(F),
$$
 where we assume $(\ov{y},\cM,\pi)$ to be the output of MA with
 $\cM=\{M_1,\ldots,M_k\}$ and $\pi=\{p_1,\ldots, p_k\}$ (\emph{cf.} the proof of Lemma~\ref{l.Choquet-upper-integral}).  Set
$$
     \cS := \{M_1\}\cup \{F\in \cF\mid v(F)\neq 0\}
$$
and let $y^*\in \R_+^\cS$ be the (with respect to the index order of $\cF$)
lexicographically maximal vector with the optimality property
$$
 \sum_{S\in\cS}y^*_S \1_S\leq f\quad\mbox{and}\quad \sum_{S\in\cS}y^*_Sv(S) = \tilde{v}(f).
$$
We will argue that the assumption $\ov{y}_{M_1} > y^*_{M_1}$ would lead to a contradiction.

\medskip
Indeed, there must exist some $S\in \cS\setminus\{M_1\}$ with $y_S^*>0$ and $M_1\cap S\neq \emptyset$. So we have $S\vee M_1 \succeq M_1$ and hence
$M_1= S\vee M_1 \succeq S$, \emph{i.e.}, $M_1$ is the unique maximal member of
$$
\cC:=\{S\in \cS\mid y^*_S >0, M_1\cap S\neq \emptyset\}.
$$
 So any two members $F,G\in\cC$ are co-intersecting.

\medskip
\textsc{Claim:} $\cC$ is a chain in $(\cF,\preceq)$.

\medskip
 Suppose $\cC$ did contain incomparable sets $F,G$. Then we could decrease $y^*$ on the sets $F$ and $G$ by $\varepsilon:=\min\{y^*_F,y^*_G\}>0$ and increase it on $F\wedge G$ and $F\vee G$ by the same value $\varepsilon>0$. Let $y'$ be the resulting vector. Because of the supermodularity of $v$, we have
 \begin{multline}
 v(F\wedge G) y'_{F\wedge G} +  v(F\vee G) y'_{F\vee G} + v(F)y'_{F} + v(G) y'_{G}
\\ \geq\; v(F\wedge G) y^*_{F\wedge G} +  v(F\vee G) y^*_{F\vee G}+ v(F) y^*_{F} +v(G) y^*_{G}.\label{eq:super}
\end{multline}
In view of $F\vee G \succeq F$ and the monotonicity of the capacity $v$, we have
$$
v(F\vee G) \geq v(F) >0\quad\mbox{and thus}\quad F\vee G\in \cS.
$$
So $y'$ would be lexicographically larger than $y^*$,  feasible by (\ref{eq.submodular}) and optimal by (\ref{eq:super}), which contradicts the choice of $y^*$.

\medskip
Let $C$ be the largest member of $\cC\setminus\{M_1\}$ and observe from the consecutiveness of $(\cF,\preceq)$:
$$
    C \supseteq M_1\cap S \quad\mbox{for all $S\in \cS\setminus\{M_1\}$.}
$$
So $y^*$ could be increased on $M_1$ by $y_{C}^*>0$ and decreased on  $C$ by the same amount without violating feasibility -- again in contradiction to the choice of $y^*$. We therefore  conclude $y^*_{M_1} =\ov{y}_{M_1}$, which establishes the Theorem.
\end{proof}

\bigskip
\subsubsection{Union-closed systems}
 As an application of Theorem~\ref{t.supermodularity}, consider a family $\cF$
 that is closed under taking arbitrary unions. Then $(\cF,\subseteq)$ is an
 intersection system in particular and $\cF_0$ is closed under the well-defined operations
\begin{align*}
F'\vee F''& := F'\cup F''\\
F'\wedge F''& := \bigcup\{F\in \cF_0\mid F\subseteq F'\cap F''\}.
\end{align*}

Theorem~\ref{t.supermodularity} allows us to establish the following generalization of Lov\'asz'~\cite{lov83} result (Proposition~\ref{prop:mono}).

\begin{theorem}\label{th:vhatv}
Assume that $\cF$ is union-closed and $v$ a capacity on $(\cF,\subseteq)$. Then the following statements are equivalent:
\begin{enumerate}
\item $\D\int_\cF f\,dv = \max\Big\{\(v,y\)\mid y \in\R^\cF_+,\sum_{F \in\cF}y_F\1_F\leq f\Big\}$ for all $f\in\mathbb{R}_+^n$.
\item The functional $f\mapsto\int_\cF f\, dv$ is superadditive on $\R^N_+$.
\item $v$ is supermodular.
\end{enumerate}
\end{theorem}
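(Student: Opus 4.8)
The plan is to prove the cycle of implications $(1)\Rightarrow(2)\Rightarrow(3)\Rightarrow(1)$, leaning on the results already established for intersection systems.

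Since $\cF$ is union-closed, $(\cF,\subseteq)$ is an intersection system, with $F'\vee F''=F'\cup F''$ and $F'\wedge F'':=\bigcup\{F\in\cF_0\mid F\subseteq F'\cap F''\}$ as recorded just above the statement. Hence $(3)\Rightarrow(1)$ will be nothing but Theorem~\ref{t.supermodularity} applied to the supermodular capacity $v$. For $(1)\Rightarrow(2)$: if $y\geq 0$ satisfies $\sum_{F\in\cF}y_F\1_F\leq f$ and $y'\geq 0$ satisfies $\sum_{F\in\cF}y'_F\1_F\leq f'$, then $y+y'$ is admissible for $f+f'$, so (using $v\geq 0$) the functional $f\mapsto\max\{\langle v,y\rangle\mid y\geq 0,\ \sum_{F\in\cF}y_F\1_F\leq f\}$ is superadditive on $\R_+^N$; by (1) so is $f\mapsto\int_\cF f\,dv$.

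For $(2)\Rightarrow(3)$ I would recall from Corollary~\ref{c.submodularity} (applicable since union-closed families are weakly union-closed) that superadditivity of $f\mapsto\int_\cF f\,dv$ is equivalent to supermodularity of the extension $\hat v$ on $(2^N,\subseteq)$. Fix $F,G\in\cF$ and put $F\vee G:=F\cup G$ and $F\wedge G:=\bigcup\{F'\in\cF_0\mid F'\subseteq F\cap G\}$; then (S1) is immediate. By union-closedness $\cF(F\cap G)$ is closed under union, so either it is empty and $F\wedge G=\emptyset$, or it has a unique maximal member $F\wedge G$; in either case the formula of Corollary~\ref{c.Choquet-Choquet} gives $\hat v(F\cap G)=v(F\wedge G)$ (with $v(\emptyset):=0$), while $\hat v(F)=v(F)$, $\hat v(G)=v(G)$, $\hat v(F\cup G)=v(F\cup G)$ since $F,G,F\cup G\in\cF$. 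Specializing the supermodular inequality for $\hat v$ to $S=F$, $T=G$ then yields $v(F\vee G)+v(F\wedge G)\geq v(F)+v(G)$, which is (S2). As this holds for all pairs, it holds in particular for co-intersecting ones, so $v$ is supermodular.

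I expect the argument to be short precisely because the substantial work has already been done: Theorem~\ref{t.supermodularity} supplies $(3)\Rightarrow(1)$, and Corollary~\ref{c.submodularity} reduces $(2)\Rightarrow(3)$ to a one-line specialization. The only points needing care will be the identification $\hat v(F\cap G)=v(F\wedge G)$ — in particular the degenerate case in which no non-empty member of $\cF$ lies inside $F\cap G$ — and the observation that verifying (S1)--(S2) for all pairs subsumes the co-intersecting ones; no genuinely new difficulty should arise.
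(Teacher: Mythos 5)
Your proposal is correct, and its skeleton coincides with the paper's: the implications $(3)\Rightarrow(1)\Rightarrow(2)$ are obtained exactly as in the paper from Theorem~\ref{t.supermodularity} (superadditivity of the max-form being the easy additivity-of-feasible-$y$'s observation, which incidentally does not even need $v\geq 0$). The only divergence is in $(2)\Rightarrow(3)$. The paper argues directly: it applies superadditivity to the two functions $\tfrac12\1_F$ and $\tfrac12\1_{F'}$ and evaluates $\int_\cF\tfrac12(\1_F+\1_{F'})\,dv=\tfrac12\bigl(v(F\cup F')+v(F\wedge F')\bigr)$, which gives (S2) at once. You instead route through Corollary~\ref{c.submodularity} (superadditivity of the integral $\Leftrightarrow$ supermodularity of $\hat v$ on $2^N$) and then specialize the supermodular inequality for $\hat v$ to $S=F$, $T=G$, using Corollary~\ref{c.Choquet-Choquet} to identify $\hat v(F\cap G)=v(F\wedge G)$. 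The two arguments rest on the same underlying fact --- the value of $\hat v$ at $F\cap G$ is $v$ of the unique maximal member of $\cF(F\cap G)$ (or $0$ if that family is empty) --- so the content is essentially identical; the paper's version is shorter because it never names $\hat v$, while yours makes explicit both the degenerate case $\cF(F\cap G)=\emptyset$ and the fact that verifying (S1)--(S2) for all pairs covers the co-intersecting ones, points the paper leaves implicit. Either way the step is sound.
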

\begin{proof}
(iii) $\Rightarrow$ (i) $\Rightarrow$ (ii) follows from Theorem~\ref{t.supermodularity}. We show (ii) $\Rightarrow$ (iii):
\[
\int_\cF \frac{1}{2}\1_F\,dv + \int_\cF\frac{1}{2}\1_{F'}\,dv\leq\int_\cF\frac{1}{2}(\1_F+\1_{F'})\,dv
\]
yields
\[
\frac{1}{2}v(F) + \frac{1}{2}v(F') \leq \frac{1}{2}(v(F\cup F') + v(F\wedge F'))
\]
for any $F,F'\in \cF$. Hence $v$ is supermodular.
 \end{proof}

\begin{corollary}\label{cor:8}
Let $\cF$ be a union-closed and $v$ a capacity with extension $\hat{v}$ on $(\cF,\subseteq)$. Then the following statements are equivalent:
\begin{enumerate}
\item $v:\cF\to \R$ is supermodular on $(\cF,\subseteq)$.
\item $\hat{v}:2^N\to \R$ is supermodular on $(2^N,\subseteq)$.
\end{enumerate}
\end{corollary}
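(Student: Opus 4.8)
The plan is to prove Corollary~\ref{cor:8} by passing through Corollary~\ref{c.Choquet-Choquet} and Corollary~\ref{c.submodularity}, which already do most of the work. Since a union-closed family is in particular weakly union-closed, Corollary~\ref{c.submodularity} tells us that the operator $f\mapsto \int_\cF f\,dv$ is superadditive on $\R_+^N$ if and only if $\hat v$ is supermodular on $(2^N,\subseteq)$. On the other hand, Theorem~\ref{th:vhatv} (which applies because $\cF$ is union-closed and $v$ a capacity) tells us that the same operator is superadditive on $\R_+^N$ if and only if $v$ is supermodular on $(\cF,\subseteq)$ in the sense of (S1)--(S2). Chaining these two equivalences immediately gives: $v$ supermodular on $(\cF,\subseteq)$ $\iff$ $\hat v$ supermodular on $(2^N,\subseteq)$, which is exactly the assertion.

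Concretely, the proof would read as follows. First I would note that $\cF$ union-closed implies $\cF$ weakly union-closed, so that Corollary~\ref{c.submodularity} applies to $v$; this gives (ii)$_{\text{Cor\,}\ref{cor:8}}$: the operator $f\mapsto\int_\cF f\,dv$ is superadditive on $\R_+^N$ iff $\hat v$ is supermodular on $(2^N,\subseteq)$. Second, since $\cF$ is union-closed, $(\cF,\subseteq)$ is an intersection system with the canonical $\vee,\wedge$ operations recorded just before Theorem~\ref{th:vhatv}; hence Theorem~\ref{th:vhatv} applies and its equivalence (ii)$\Leftrightarrow$(iii) states that the operator $f\mapsto\int_\cF f\,dv$ is superadditive on $\R_+^N$ iff $v$ is supermodular on $(\cF,\subseteq)$. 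Combining, $v$ is supermodular on $(\cF,\subseteq)$ $\iff$ $\hat v$ is supermodular on $(2^N,\subseteq)$, which is the claim.

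\begin{proof}
Since $\cF$ is union-closed, it is in particular weakly union-closed, so Corollary~\ref{c.submodularity} applies: the functional $f\mapsto\int_\cF f\,dv$ is superadditive on $\R_+^N$ if and only if the extension $\hat v$ is supermodular on $(2^N,\subseteq)$. On the other hand, a union-closed $\cF$ yields an intersection system $(\cF,\subseteq)$, so Theorem~\ref{th:vhatv} applies and, by the equivalence of its statements (ii) and (iii), the same functional $f\mapsto\int_\cF f\,dv$ is superadditive on $\R_+^N$ if and only if $v$ is supermodular on $(\cF,\subseteq)$. Chaining these two equivalences, $v$ is supermodular on $(\cF,\subseteq)$ if and only if $\hat v$ is supermodular on $(2^N,\subseteq)$.
\end{proof}

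The argument is essentially a bookkeeping combination of two results proved earlier, so there is no substantial obstacle; the only point requiring a moment's care is checking the hypotheses line up — namely that ``union-closed'' feeds both Corollary~\ref{c.submodularity} (via ``weakly union-closed'') and Theorem~\ref{th:vhatv} (via ``intersection system'') with the \emph{same} notion of extension $\hat v$ and the \emph{same} supermodularity notion (S1)--(S2) for $v$ on $(\cF,\subseteq)$. Since Theorem~\ref{th:vhatv} is itself stated for union-closed $\cF$ with the canonical $\vee,\wedge$, and Corollary~\ref{cor:8} inherits exactly those hypotheses, the match is immediate and the corollary follows with no further computation.
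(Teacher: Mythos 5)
Your argument is correct and is exactly the intended route: the paper leaves this corollary as an immediate consequence of chaining Theorem~\ref{th:vhatv} (superadditivity of $f\mapsto\int_\cF f\,dv$ $\Leftrightarrow$ supermodularity of $v$, valid since union-closed families give intersection systems and $v$ is a capacity) with Corollary~\ref{c.submodularity} (superadditivity $\Leftrightarrow$ supermodularity of $\hat v$, valid since union-closed implies weakly union-closed). Your hypothesis-checking is also the right point of care, and it goes through as you say.
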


\section*{Acknowledgment}
The second author is indebted to Toshiaki Murofushi for fruitful discussions on
an earlier version of this paper.

\bibliographystyle{plain}
\bibliography{../BIB/fuzzy,../BIB/grabisch,../BIB/general}

\end{document}